\documentclass[twocolumn]{IEEEtran}

\ifCLASSINFOpdf
\else
\fi
\usepackage{amsmath,amsfonts}

\usepackage[linesnumbered,ruled,vlined]{algorithm2e}
\usepackage{array}
\usepackage{textcomp}
\usepackage{stfloats}
\usepackage{url}
\usepackage{verbatim}
\usepackage{graphicx}
\usepackage{cite}

\usepackage{subcaption} 

\usepackage{amsmath,amssymb,amsfonts}

\usepackage{bm}

\usepackage{textcomp}
\usepackage[table, dvipsnames]{xcolor} 

\usepackage{pbox}

\usepackage{amsthm}

\theoremstyle{plain}
\newtheorem{lemma}{Lemma}
\newtheorem{theorem}{Theorem}
\newtheorem{remark}{Remark}

\DeclareMathOperator*{\argmax}{arg\,max}

\begin{document}


\title{Unifying Performance and Equity via Percentile Optimization and Cell-Free Massive MIMO}

\author{Ahmad Khan\IEEEauthorrefmark{1},~\IEEEmembership{Member,~IEEE}, Hussein~A.~Ammar\IEEEauthorrefmark{2},~\IEEEmembership{Member,~IEEE}, {Raviraj Adve}\IEEEauthorrefmark{3},~\IEEEmembership{Fellow,~IEEE}
\thanks{
    \IEEEauthorrefmark{1}Ahmad Khan is with Ericsson Ottawa R\&D, 349 Terry Fox Drive, Kanata, K2K 2V6. 
}
\thanks{
    \IEEEauthorrefmark{2}Hussein A. Ammar is with the Department of Electrical and Computer Engineering at the Royal Military College of Canada, 13 General Crerar Crescent, Kingston, ON K7K 7B4.
}
\thanks{
    \IEEEauthorrefmark{3}{Raviraj Adve is with the Department of Electrical and Computer Engineering at the University of Toronto, 10 Kings College Road, M5S 3G4.\newline Both Hussein A. Ammar and Ahmad Khan contributed equally to this work.}
}
\vspace{-1em}
}

\maketitle

\begin{abstract}
Improving the throughput of the weakest users is a central objective of wireless resource allocation, and the objective around which the user-centric cell-free massive multiple-input multiple-output (UC-mMIMO) architecture is built. This paper brings the two together by formulating and solving percentile programs for UC-mMIMO networks. We maximize the sum-least-$q^{\rm th}$-percentile (SLqP) rate via downlink power control, a utility that targets the user equipments (UEs) in the lowest $q$ percentile of the rate distribution and subsumes max-min-rate and sum-rate maximization as special cases. The problem is non-convex, non-smooth and NP-hard. The cyclic maximization approach of one of the previous percentile optimization series does not carry over: the coherent joint transmission technique introduces cross terms in the powers that destroy the block-concavity on which it relies. We develop a change of variables that restores block-concavity, yielding an algorithm with closed-form auxiliary updates and a concave power update, and we give a self-contained proof of non-decreasing convergence to a directional stationary point. Simulations establish two facts: first, percentile optimization retains its value in an architecture with no cell edges; relative to the weighted minimum mean-square error benchmark, it raises the average rate of the lowest-$10^{\rm th}$-percentile UEs by 56--72\% and Jain's fairness index from below 0.92 to above 0.99. Second, compared directly with massive MIMO and small cells under the same SLqP utility, UC-mMIMO delivers the highest lower-percentile rates at every access point density tested, by up to 62\% and 2.7$\times$ respectively, at the same fairness, and its advantage widens with densification.
\end{abstract}

\begin{IEEEkeywords}
Percentile optimization, user fairness, power control, user-centric cell-free massive MIMO, max-min optimization, proportional fairness, beyond 5G.
\end{IEEEkeywords}

\section{Introduction}
User fairness reflects how evenly the network resources are distributed among user equipment (UEs), and it is a key objective for operators, since it ensures that no UE is left with a service quality substantially below the target quality of service (QoS) while others are allocated more than they require~\cite{7917284}. Fairness is of particular concern for the cell-edge UEs of conventional networks, which occupy the lowest $5\%$ or $10\%$ of the rate distribution~\cite{9084256}, and treating these UEs equitably requires that the network resources be allocated with fairness as an explicit design objective.

The user-centric cell-free massive multiple-input multiple-output (UC-mMIMO) architecture has been proposed for beyond-5G networks with the aim of eliminating the cell boundaries of conventional cellular networks: each user equipment (UE) is served by a set of neighbouring, cooperating access points (APs)~\cite{interdonato2019ubiquitous, 7827017, 9064545, surveyCF_9650567}, so that each UE is at the center of its own cluster and has its own ``virtual~cell''~\cite{surveyCF_9650567}. This marks a clear departure from the cellular structure that has governed wireless network design for decades. It improves, in particular, the performance of ``cell-edge'' UEs, which consistently experience the poorest coverage and the lowest data rates, i.e., the lowest rate percentiles. A gain of~20~dB in the $5^\mathrm{th}$-percentile signal-to-interference-plus-noise ratio (SINR) can be achieved relative to the co-located massive MIMO scheme adopted in 4G-advanced and 5G networks \cite{demir2021foundations}.

\begin{figure}[t]
	\centering
    \includegraphics[width=1\linewidth]{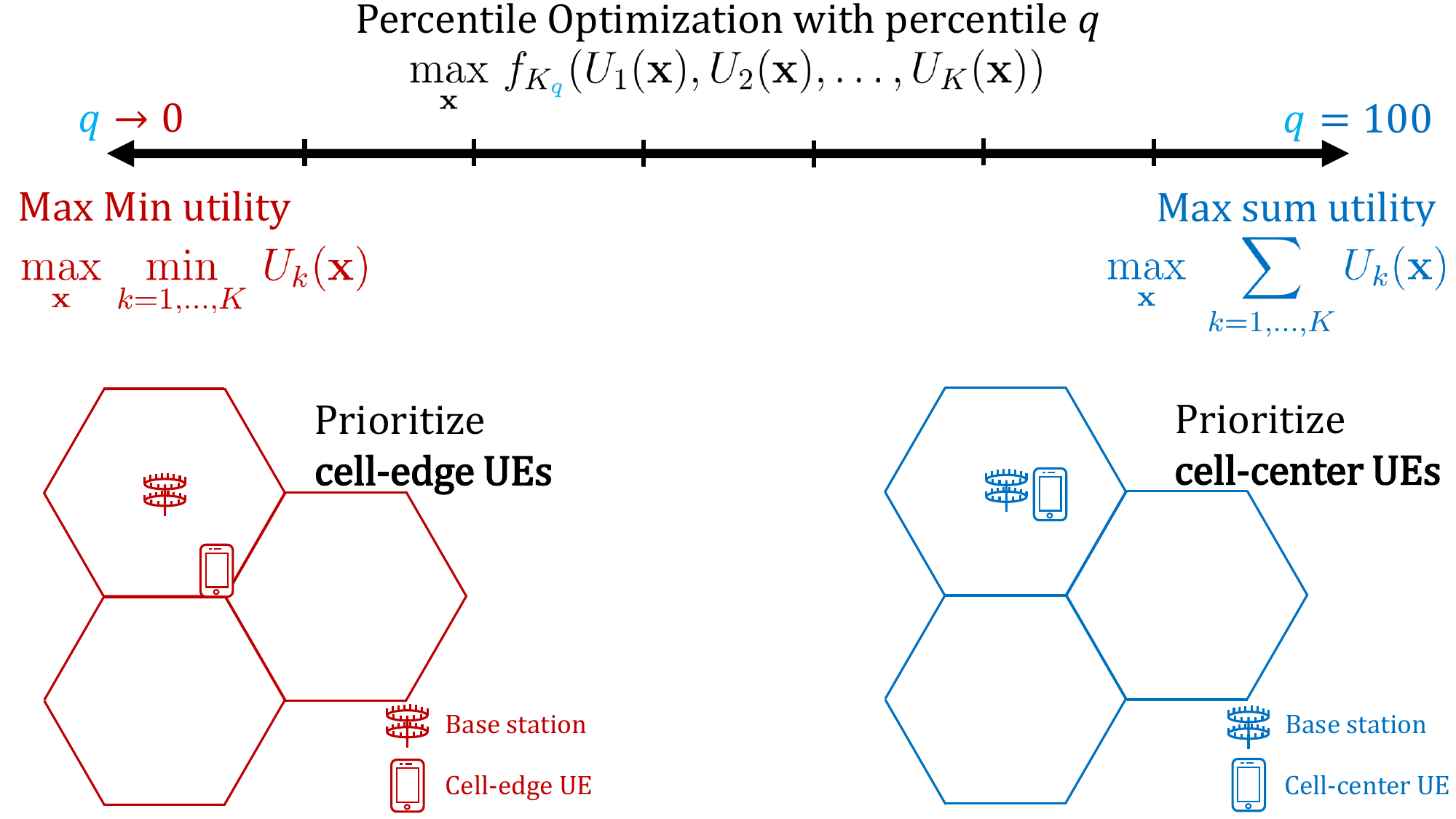}
	\caption{Percentile programs family. The percentile $q$ interpolates between max-min-rate maximization ($q \to 0$), which favors ``conventional cell-edge'' UEs, and sum-rate maximization ($q = 100$), which favors ``conventional cell-center UEs''.}
	\label{fig:percentile_qnty}
\end{figure}

\vspace{-1em}
\subsection{Related Work and Research Gap}

Resource allocation in UC-mMIMO networks has been studied under a variety of utilities, most commonly the minimum rate, the weighted sum-rate and the sum-rate. The max-min utility~\cite{7827017} guarantees a reasonable level of service to the worst-performing UE. A weighted-sum utility~\cite{9519163, 9570126} allows the network to prioritize particular UEs, with weights typically updated in each time slot according to a fairness rule such as proportional fairness~\cite{10163872}. The sum-rate utility~\cite{9489284} yields the highest aggregate throughput, but is unfair to weaker UEs, which are assigned little or no power and are consequently driven into outage~\cite{PercI_10466635}. 

None of these utilities explicitly targets the lower-percentile UEs. Percentile programs are a recently introduced class of optimization problems that do~\cite{PercI_10466635, PercII_10458868}. Rather than the single worst UE or the aggregate, percentile optimization targets the set of UEs occupying the lowest $q$ percentile of the utility distribution. Crucially, the class subsumes the utilities above: max-min-rate and sum-rate maximization are recovered as the two extreme members of the family, and the percentile $0 < q \le 100$ spans between them, as illustrated in Fig.~\ref{fig:percentile_qnty}, giving the network operator direct control over the inherent tradeoff. Optimizing for the lowest $q$ percentile is, moreover, distinct from optimizing for the single worst UE, and the two lead to different outcomes~\cite{PercI_10466635}. Catering to the weakest UE alone incurs a large rate penalty for most other UEs and is fragile: a UE with a very poor channel consumes a disproportionate share of the resources~\cite{5595728} and, as the number of UEs grows without bound, the optimal max-min objective converges to zero almost surely~\cite{8972604}. Percentile optimization instead improves a broader set of low-performing UEs and strikes a more favourable balance between fairness and aggregate utility~\cite{PercII_10458868}.

Alternative utilities have also been proposed with the aim of prioritizing cell-edge UEs in conventional networks, notably $\alpha$-fair power allocation~\cite{7111370}, in which a scalar $\alpha$ adjusts the degree of fairness. As observed in~\cite{5461911}, however, such measures are not physically interpretable: they specify neither the degree of fairness being sought nor the percentile of UEs being targeted, and it is unclear in what sense a problem with $\alpha = 3$ is fairer than one with $\alpha = 2$. From the perspective of percentile optimization, $\alpha$-fairness is a heuristic, as is weighted-sum utility with proportional-fair weights.
The two-part study in~\cite{PercI_10466635, PercII_10458868} removes this ambiguity by optimizing lower-percentile throughput explicitly, so that the degree of fairness becomes a design input rather than a by-product of a tuning parameter: Part~I~\cite{PercI_10466635} addresses power control for a multicell, multiuser single-input single-output network, and Part~II~\cite{PercII_10458868} beamforming for the multiuser MIMO setting.

All of the studies above were developed for conventional cellular networks, in which each UE is served by a single base station (BS). Such an architecture does not prioritize cell-edge UEs; indeed, the cellular structure is itself one of the principal obstacles to improving cell-edge performance, since it gives rise to regions in which the desired and interfering signal powers are comparable~\cite{10522673}. UC-mMIMO, in contrast, is designed around precisely these UEs~\cite{demir2021foundations}. It is therefore natural to ask what is obtained when percentile optimization is combined with an architecture that already favours the weaker UEs, and in particular \textit{whether the resource allocation still matters once the architecture has removed the cell edges}. To the best of our knowledge, this question has not previously been studied, and it is the gap that the present work addresses.

\subsection{Contributions}

In this work, we formulate and solve percentile programs for the UC-mMIMO architecture: we optimize the downlink transmit powers for a sum-least-$q^{\rm th}$-percentile (SLqP) utility that targets the lowest $q$ percentile of UEs explicitly, a problem that is non-convex, non-smooth and NP-hard. Our contributions are threefold:
\begin{itemize}
\item \textit{New algorithm for the cell-free setting, with a complete convergence analysis:} We show that the cyclic maximization approach of \cite{PercI_10466635} cannot be applied directly to UC-mMIMO systems. Coherent joint transmission from the APs of a cluster introduces cross products of the powers of distinct APs into every rate expression, and these terms destroy the block-concavity of the fractionally transformed problem on which the approach of \cite{PercI_10466635} rests. 

We develop a change of variables that restores block-concavity exactly, and combine it with the quadratic fractional transform and an epigraph reformulation to obtain a cyclic maximization algorithm with closed-form auxiliary updates and a concave power update; the SLqP utility is optimized directly, without smoothing, relaxation or surrogate approximation. 

We then give a self-contained convergence analysis that accounts for the change of variables: the algorithm is non-decreasing in the SLqP rate, and every limit point is a directional stationary point, which we prove by establishing that the minorant constructed in the transformed domain is tangent to the non-smooth SLqP objective, including in its directional derivatives. The additional step is exactly what separates multi-node from single-node architectures: for the massive MIMO and small-cell benchmarks, our algorithm reduces to the scheme of \cite{PercI_10466635}.

\item \textit{Percentile optimization retains its value in a network with no cell edges:} One might expect an architecture built to eliminate cell-edge UEs to leave little for a cell-edge-oriented resource allocation to do. Our results show otherwise. Relative to the weighted minimum mean-square error (WMMSE) benchmark~\cite{5756489}, the proposed algorithm improves the average rate of the lowest-$10^{\rm th}$-percentile UEs in a UC-mMIMO network by 56--72\% across AP densities from 75 to 250~APs/km$^2$, and raises Jain's fairness index from 0.80--0.92 to above 0.99. Strikingly, a conventional massive MIMO network under percentile optimization delivers about as much to its lowest-percentile UEs as a UC-mMIMO network under WMMSE, and at moderate densities considerably more: the resource allocation is worth as much as the architecture, and the two gains compound.

\item \textit{Cell-free is the superior architecture for lower-percentile service:} We compare UC-mMIMO, co-located massive MIMO and small cells under the same SLqP utility, with the same AP and UE locations and the same total power, so that the comparison isolates the effect of the architecture. UC-mMIMO delivers the highest lower-percentile rates at every AP density and every cell size considered. Its advantage over massive MIMO grows from 8\% at 75~APs/km$^2$ to 62\% at 250~APs/km$^2$, and over small cells from 1.7$\times$ to 2.7$\times$, while all three architectures attain the same fairness; UC-mMIMO is thus the architecture that most effectively converts densification into the best lower-percentile throughput.
\end{itemize}
In addition to the above, we quantify the computational complexity and the fronthaul signalling load of the proposed scheme, and show that the latter is incurred once per coherence block and does not grow with the number of iterations.

\textit{Organization and Notation:} Section~\ref{sec:system_model} presents the system model, Section~\ref{sec:problem_formulation} formulates the problem, Section~\ref{sec:proposed_approach} develops the proposed algorithm and its analysis, Section~\ref{sec:results} presents numerical results and Section~\ref{sec:conclusion} concludes the paper. 

Boldface lower and upper case letters, e.g., ${\bf a}$ and ${\bf A}$, denote column vectors and matrices; ${\bf I}_M$ is the identity matrix of size $M$ and ${\bf 1}_K$ the all-ones vector of length $K$; $(\cdot)^T$, $(\cdot)^*$ and $(\cdot)^H$ denote the transpose, conjugate and conjugate transpose; $|\mathcal{S}|$ is the cardinality of the set $\mathcal{S}$; $\|\cdot\|_2$ and $|\cdot|$ denote the Euclidean norm and the modulus; ${\bm x} \sim \mathcal{CN}({\bm \mu}, {\bf R})$ denotes a circularly symmetric complex Gaussian vector with mean ${\bm \mu}$ and covariance ${\bf R}$; and ${\bf x}^{\uparrow}$ denotes the vector ${\bf x}$ sorted in ascending order, with $x_i^{\uparrow}$ its $i^{\rm th}$ entry.

\begin{figure}[t]
	    \centering
	\includegraphics[width=\linewidth, trim=0 0 329 0, clip]{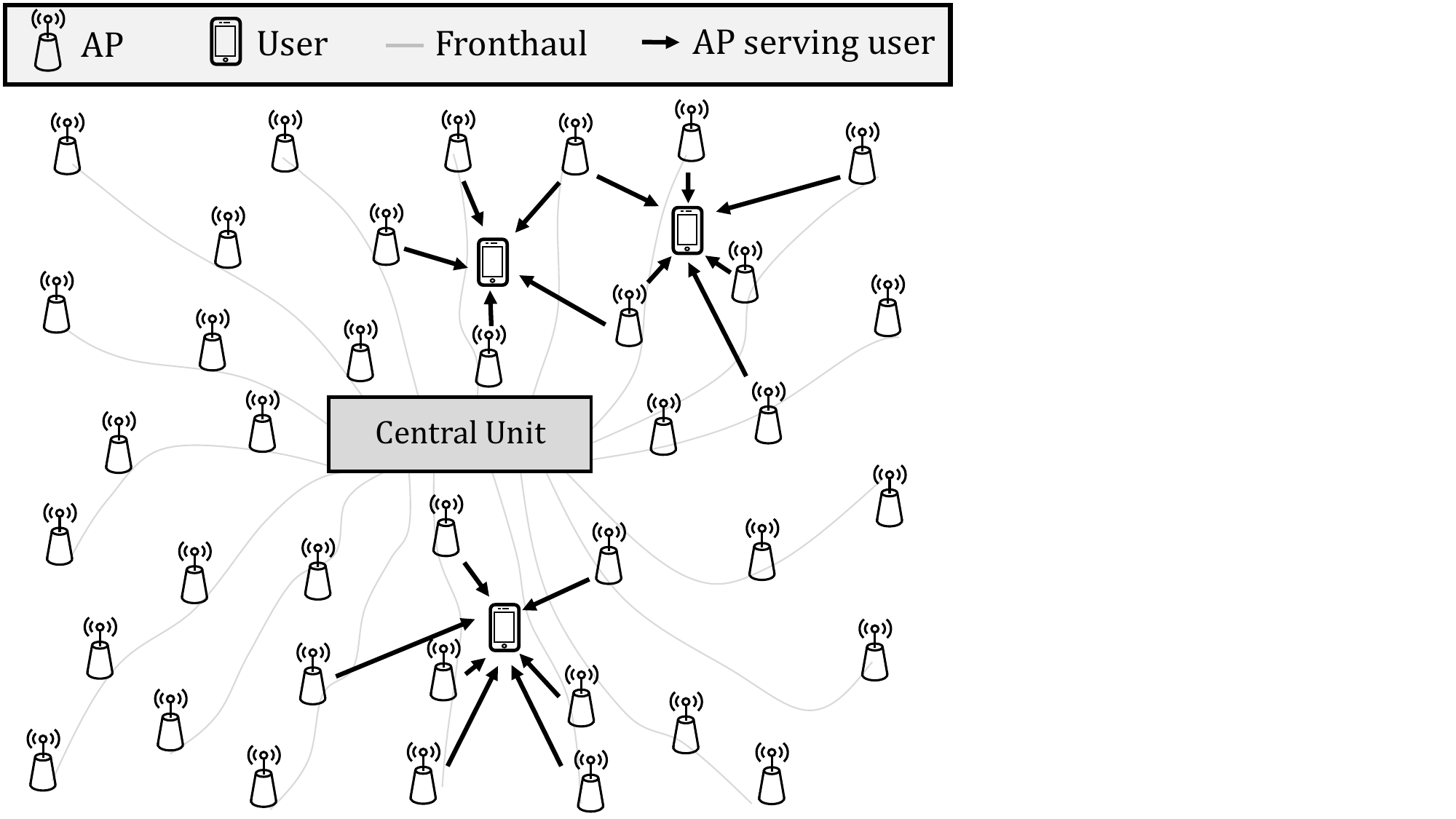}
	\caption{A user-centric cell-free massive MIMO network.}
	\label{fig:UC_singleC}
\end{figure}

\section{System Model}\label{sec:system_model}

We consider the downlink of a UC-mMIMO network comprising a set of APs $\mathcal{B} = \{1, \dots, B\}$ and a set of active UEs $\mathcal{K} = \{1, \dots, K\}$, indexed network-wide by $b$ and $k$, respectively. Every AP is equipped with $M$ antennas and every UE with a single antenna. All APs are connected through a wired fronthaul to a central unit (CU), which performs the resource allocation. The network is shown in Fig.~\ref{fig:UC_singleC}.

We assume operation in the sub-6~GHz band and a block-fading channel model, in which the channel between AP $b$ and UE $k$ is ${\bf h}_{k,b} = \sqrt{\beta_{k,b}}\, {\bf g}_{k,b}$, where $\beta_{k,b}$ is the large-scale fading (LSF) coefficient, assumed constant across many coherence intervals, and ${\bf g}_{k,b}$ is the small-scale fading. Each UE $k$ is served by a cluster of neighbouring APs $\mathcal{C}_k \subseteq \mathcal{B}$, and each AP $b$ serves the set of UEs $\mathcal{K}_b \subseteq \mathcal{K}$; the two are dual descriptions of the same association, in the sense that $b \in \mathcal{C}_k \Leftrightarrow k \in \mathcal{K}_b$. The clusters are formed irrespective of any cellular structure. We populate the cluster of UE $k$ with every AP whose LSF toward UE $k$ exceeds a threshold $\rho$, together with the strongest AP so that no UE is left unserved when the threshold is met by no AP, i.e.,
\begin{align}\label{eq:C_k}
\mathcal{C}_{k}
&=
\{b \in \mathcal{B}: \beta_{k,b} \ge \rho \} \cup \Big\{ \argmax_{b \in \mathcal{B}}\ \beta_{k,b} \Big\}
\end{align}
where the threshold $\rho$ is obtained from the relevant path loss model ${\rm PL}(\cdot)$ and a distance threshold $\bar{d}$ as $\rho = {\rm PL}(\bar{d})$. To focus on the architecture and the combination with percentile optimization, we assume perfect channel state information is available to all APs. This assumption can be easily relaxed~\cite{9519163}.

AP $b$ transmits the signal ${\bf x}_b = \sum_{k \in \mathcal{K}_b} \sqrt{p_{k,b}}\, {\bf v}_{k,b}\, s_{k}$, where $s_k$ is the unit-power data symbol intended for UE $k$, $p_{k,b}$ is the power that AP $b$ allocates to UE $k$, and ${\bf v}_{k,b}$ is the unit-norm beamformer of AP $b$ for UE $k$, which we fix throughout to the maximum-ratio transmission (MRT) direction
\begin{equation}\label{eq:MRT}
\mathbf{v}_{k,b}=\frac{\mathbf{h}_{k,b}}{\left\Vert \mathbf{h}_{k,b}\right\Vert _{2}}
\end{equation}
Each AP is subject to the power budget $\sum_{k\in\mathcal{K}_{b}}p_{k,b}\leq P_{\mathrm{max}}$, and the powers $\{p_{k,b}\}$ are the only design variables. UE $k$ receives \[y_k = \sum_{b \in \mathcal{B}} {\bf h}_{k,b}^H {\bf x}_b + z_k = a_{k,k}(\mathbf{P}) s_k + \sum_{k' \in \mathcal{K}_{-k}} a_{k,k'}(\mathbf{P}) s_{k'} + z_k\]
comprising the desired signal, the interference and the noise, where we overload our notation to define $\mathcal{K}_{-k} \doteq \mathcal{K} \backslash \{k\}$ as the set of all UEs other than $k$, where all transmit powers are collected in $\mathbf{P}$, and where \[a_{k,k'}(\mathbf{P}) = \sum_{b \in\mathcal{C}_{k'}} \sqrt{p_{k',b}}\, {\bf h}_{k,b}^H {\bf v}_{k',b}\] is the precoded channel observed by UE $k$ when the cluster $\mathcal{C}_{k'}$ transmits to UE $k'$, and $z_k \sim \mathcal{CN}(0, \sigma^2)$ is the receiver noise. The summation over the serving cluster is absorbed into $a_{k,k'}(\mathbf{P})$, which reflects the coherent combination of the transmissions of the APs in $\mathcal{C}_{k'}$. The achievable rate of UE $k$ is therefore
\begin{equation}\label{rate_function_cell_free}
r_{k}\left(\mathbf{P}\right)=\mathrm{log}\left(1+
\frac{\left| a_{k,k}(\mathbf{P})\right|^{2}}
{ \underset{k' \in \mathcal{K}_{-k}}{\sum}  \left| a_{k,k'}(\mathbf{P})\right|^{2}
+\sigma^{2}}
\right)
\end{equation}
We note that the coherent sums over the clusters appear in both the signal term $|a_{k,k}(\mathbf{P})|^{2}$ and the interference terms $|a_{k,k'}(\mathbf{P})|^{2}$ of~\eqref{rate_function_cell_free}, each being the squared modulus of a sum of square roots of the powers; as we show in Section~\ref{sec:proposed_approach}, this is the feature that distinguishes the cell-free problem from its cellular counterpart.

\section{Problem Formulation}\label{sec:problem_formulation}

\subsection{Preliminaries}

A percentile $q \in (0, 100]$ indicates the relative standing of a value within a set: a value lying at the $q = 80$ percentile is at least as large as $80\%$ of the values in that set. Given an integer $K \in \mathbb{Z}_{++}$ and the percentile $q$, we define the \emph{percentile number} as $K_q = \min \{k \in \mathbb{Z}_{++} \mid 100 k/K \ge q \} \le K$, i.e., the smallest number of values for which the selected fraction attains $q$; for example, $K = 6$ and $q = 100/3$ gives $K_q = 2$.

Following ~\cite{PercI_10466635}, we define the SLqP utility function $f_{K_q}({\bf x}): \mathbb{R}^{K} \mapsto \mathbb{R}$ as the sum of the smallest $K_q$ components of its argument, i.e., $f_{K_q}({\bf x}) = \sum_{i = 1}^{K_q} x_i^{\uparrow}$. The properties of $f_{K_q}$ that we rely upon in the sequel follow directly from an equivalent variational form. For any integer $m \le K$, define the set of binary selection vectors of length $K$ with Hamming weight $m$ as $\mathcal{A}_{m} := \{{\bf a} \in \{0, 1\}^{K} \mid {\bf 1}_K^T {\bf a} = m\}$. Each ${\bf a} \in \mathcal{A}_{m}$ selects a subset of $m$ entries of ${\bf x}$, and ${\bf a}^T {\bf x}$ is the sum of the selected entries; the set has cardinality $\binom{K}{m}$. Summing the $K_q$ smallest entries of ${\bf x}$ is equivalent to selecting the subset of size $K_q$ whose sum is smallest, so that
\begin{align}\label{eq:SLqP_2}
f_{K_q}({\bf x}) = \min_{{\bf a} \in \mathcal{A}_{K_q}} {\bf a}^T {\bf x}= {\bf 1}_K^T {\bf x}
-\max_{{\bf a} \in \mathcal{A}_{K-K_q}} {\bf a}^T {\bf x}
\end{align}
We emphasize that~\eqref{eq:SLqP_2} is a characterization rather than the definitive computational recipe: the number of selection vectors grows combinatorially in $K$, and a central objective of Section~\ref{sec:proposed_approach} is to optimize $f_{K_q}$ without enumerating them.

The properties of $f_{K_q}$ on which the sequel relies were established in~\cite{PercI_10466635}; we collect them in the following lemma and refer the reader to~\cite{PercI_10466635} for the proofs.

\begin{lemma}[Properties of the SLqP function~\cite{PercI_10466635}]\label{lem:slqp}
For all $q \in (0, 100]$, the SLqP utility function $f_{K_q}$ is (i)~concave, (ii)~continuous on $\mathbb{R}^{K}$ and (iii)~non-decreasing in each of its arguments. For $K_q \ne K$ it is, however, (iv)~non-smooth: away from the \emph{border points} $\mathcal{X}_{\rm b} := \{{\bf x} \mid x_{K_q}^{\uparrow} = x_{K_q+1}^{\uparrow}\}$, at which the $K_q^{\rm th}$ and $(K_q+1)^{\rm th}$ smallest entries coincide, the partial derivative $\partial f_{K_q}({\bf x})/\partial x_i$ equals $1$ if $x_i \le x_{K_q}^{\uparrow}$ and $0$ otherwise, so that the gradient is piecewise constant, jumps across $\mathcal{X}_{\rm b}$ and does not exist on $\mathcal{X}_{\rm b}$.
\end{lemma}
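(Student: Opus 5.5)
All four properties follow from the variational form \eqref{eq:SLqP_2}, $f_{K_q}({\bf x}) = \min_{{\bf a} \in \mathcal{A}_{K_q}} {\bf a}^T {\bf x}$, which writes $f_{K_q}$ as the pointwise minimum of finitely many linear functions ${\bf x} \mapsto {\bf a}^T{\bf x}$. We use this representation throughout.

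For (i), a pointwise minimum of affine functions is concave. Concretely, for ${\bf x},{\bf y}$ and $\theta\in[0,1]$,
\[
\min_{\bf a} {\bf a}^T(\theta{\bf x}+(1-\theta){\bf y}) \ge \theta\min_{\bf a}{\bf a}^T{\bf x}+(1-\theta)\min_{\bf a}{\bf a}^T{\bf y}.
\]
For (ii), a minimum of finitely many continuous functions is continuous. Alternatively, $f_{K_q}$ is finite and concave on all of $\mathbb{R}^K$, hence continuous. For (iii), every ${\bf a}\in\mathcal{A}_{K_q}$ has nonnegative entries, so each linear piece is non-decreasing in each coordinate, and the minimum inherits this. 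Equivalently, raising $x_i$ cannot lower any order statistic $x_j^{\uparrow}$.

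For (iv), we first identify the minimizing selection. Because the coefficients are binary with fixed weight $K_q$, a standard exchange argument shows that the minimizers of ${\bf a}^T{\bf x}$ are exactly the indicator vectors of index sets consisting of $K_q$ entries not exceeding $x_{K_q}^{\uparrow}$, with ties at $x_{K_q}^{\uparrow}$ broken arbitrarily.

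Off $\mathcal{X}_{\rm b}$, we have $x_{K_q}^{\uparrow}<x_{K_q+1}^{\uparrow}$, so the set $S({\bf x})=\{i: x_i\le x_{K_q}^{\uparrow}\}$ has exactly $K_q$ elements and is the unique minimizer. The strict gap between ${\bf a}_{S}^T{\bf x}$ and every other ${\bf a}^T{\bf x}$ persists in a neighbourhood by continuity. Hence $f_{K_q}={\bf 1}_{S}^T{\bf x}$ locally, which is linear, so the gradient exists and equals ${\bf 1}_S$. This is exactly the stated partial derivative rule. Since $S$ takes finitely many values, the gradient is piecewise constant.

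On $\mathcal{X}_{\rm b}$, with $K_q\ne K$, there are at least two minimizing sets differing in one index. Pick $i\in S$ and $j\notin S$ with $x_i=x_j=x_{K_q}^{\uparrow}$. The one-sided directional derivative along $e_i$ is then $\min_{\bf a} a_i$ over the minimizers. This equals $0$, because some minimizer swaps $i$ for $j$ and so has $a_i=0$. Along $-e_i$, the one-sided derivative is $-\max_{\bf a} a_i=-1$. The two one-sided derivatives in direction $e_i$ are therefore $0$ and $1$, so $\partial f_{K_q}/\partial x_i$ does not exist and the gradient jumps across $\mathcal{X}_{\rm b}$.

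The only step needing care is this last tie analysis. In particular, we must use Danskin's formula $f'({\bf x};{\bf d})=\min_{{\bf a}\in\mathcal{A}^\star({\bf x})}{\bf a}^T{\bf d}$, where $\mathcal{A}^\star({\bf x})$ is the set of minimizing selections, and must confirm that the condition $K_q<K$ guarantees an index outside $S$ that is tied at the boundary. When $K_q=K$, the function reduces to $f_{K}({\bf x})={\bf 1}_K^T{\bf x}$, which is linear and therefore smooth, consistent with the exclusion in (iv).
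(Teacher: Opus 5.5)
Your proof is correct and follows the paper's route: (i)--(iii) come from the representation~\eqref{eq:SLqP_2} of $f_{K_q}$ as a pointwise minimum of finitely many linear functions with non-negative coefficients, and (iv) from the piecewise linearity of $f_{K_q}$; your Danskin-type computation of the unequal one-sided partial derivatives on $\mathcal{X}_{\rm b}$ spells out what the paper leaves to~\cite{PercI_10466635}, and it is the same expansion the paper uses in proving Theorem~\ref{thm:dstationary}. One wording fix: the minimizing selections are those containing every entry strictly below $x_{K_q}^{\uparrow}$ plus any choice among the entries equal to it, not arbitrary $K_q$-subsets of entries not exceeding $x_{K_q}^{\uparrow}$ (for ${\bf x}=(0,1,1)^T$ and $K_q=2$, the subset $\{2,3\}$ is not a minimizer), but your argument only exchanges two tied indices, which is valid.
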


Properties (i)--(iii) are immediate from~\eqref{eq:SLqP_2}, which exhibits $f_{K_q}$ as the pointwise minimum of finitely many linear functions with non-negative coefficients~\cite{boyd2004convex}; property (iv) reflects the fact that on each region of $\mathbb{R}^{K}$ in which the ordering of the entries of ${\bf x}$ is fixed, $f_{K_q}$ coincides with a single linear function, and that an entry which leaves the set of the $K_q$ smallest ceases to contribute to the sum.

Property (iv) rules out the direct application of first-order methods such as steepest ascent and of second-order methods such as Newton's method, since neither the gradient nor the Hessian exists at the border points. Overcoming this obstacle without resorting to smoothing is a central concern of Section~\ref{sec:proposed_approach}. Fig.~\ref{fig:nonSmooth} plots typical values of $f_{K_q}({\bf x})$ against two of its arguments while the remaining arguments are held fixed: the surface is continuous and concave, and the visible creases are precisely the border points at which the set of the $K_q$ smallest entries changes.

\begin{figure}[t]
	\centering
	\includegraphics[width=\linewidth, trim=5 0 15.5 16, clip]{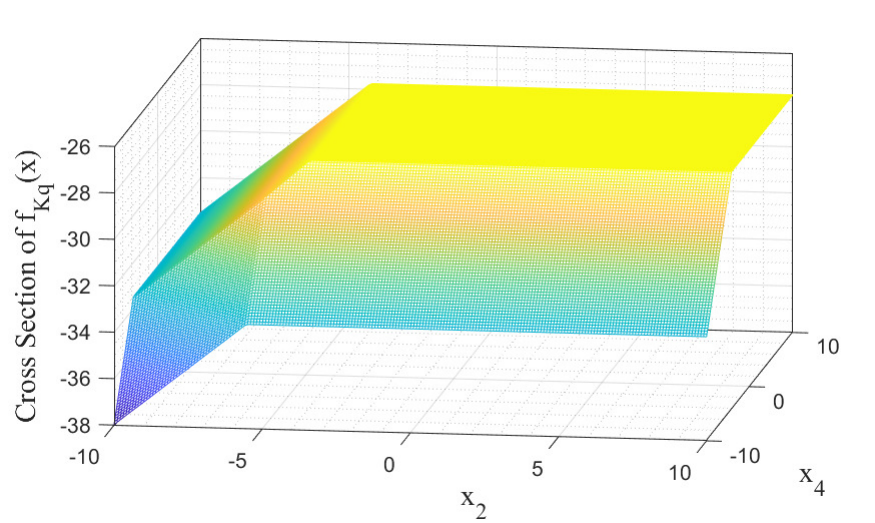}
	\caption{The SLqP function versus two of its arguments, $x_2$ and $x_4$, with the remaining arguments fixed at randomly chosen values; $K_q = 4$ and ${\bf x} \in \mathbb{R}^{10 \times 1}$. The creases are the border points at which the function is non-differentiable.}
	\label{fig:nonSmooth}
\end{figure}

Together with concavity, the monotonicity in Lemma~\ref{lem:slqp}(iii) is what permits the composition arguments, the surrogate bounds and the epigraph reformulation developed in Section~\ref{sec:proposed_approach}, each of which requires the outer aggregating function to be concave and non-decreasing. We also note that $f_{K_q}$ contains the sum and minimum utilities as special cases: $q = 100$ gives $K_q = K$ and $f_{K}({\bf x}) = {\bf 1}_{K}^{T}{\bf x}$, whereas $q = 100/K$ gives $K_q = 1$ and $f_{1}({\bf x}) = \min_{i} x_i$, so that the problem formulated below subsumes sum-rate and max-min-rate maximization as the two extreme members of a single family.

\subsection{Problem Definition and Analysis}

We are now ready to formulate the SLqP rate maximization problem for the UC-mMIMO architecture. Under the system model of Section~\ref{sec:system_model}, our aim can be stated concisely: working in conjunction with the MRT beamforming in~\eqref{eq:MRT}, how should the transmit powers be allocated so as to maximize the network throughput at a desired percentile $q$? The problem that encapsulates this goal is
\begin{align}
\underset{\mathbf{P}}{\mathrm{maximize}}\quad
& f_{K_{q}}\left(r_{1}\left(\mathbf{P}\right),\ldots,r_{K}\left(\mathbf{P}\right)\right)
\tag{P1}\label{SLqP_original_problem}
\\
\mathrm{subject\ to}\quad
& p_{k,b}\geq0,\quad k=1,\ldots,K,\; b=1,\ldots,B
\nonumber
\\
& \sum_{k\in\mathcal{K}_{b}}p_{k,b}\leq P_{\mathrm{max}},\quad b=1,\ldots,B
\nonumber
\end{align}
where the constraints enforce the non-negativity of the powers and the per-AP budget. As in the cellular setting, the choice of $q$ controls the tradeoff between cell-center and cell-edge UEs; with $q=10$, for instance, Problem~\ref{SLqP_original_problem} optimizes the $10^{\rm th}$-percentile throughput, which is among the measures of cell-edge rate most widely adopted in the literature.

Problem~\ref{SLqP_original_problem} inherits the two difficulties of the percentile programs of \cite{PercI_10466635}. First, the objective is non-concave: from~\eqref{eq:SLqP_2}, it is the pointwise minimum, over all $\binom{K}{K_q}$ subsets of size $K_q$, of partial sum-rates, each a sum of logarithms of SINRs in which the powers appear in both numerator and denominator, and a pointwise minimum preserves concavity but cannot create it. Second, it is non-smooth by Lemma~\ref{lem:slqp}(iv) wherever the identity of the $K_q$ worst UEs changes. The two difficulties interlock, since each rules out the natural remedy for the other: a concave surrogate around the current iterate requires the gradient information that the non-smoothness denies, while smoothing the problem by introducing a slack variable together with one constraint per candidate active set, as in the max-min and fractional programming (FP) literature, introduces $\binom{K}{K_q}$ non-convex constraints and is intractable even for small networks~\cite{PercI_10466635}.

A third difficulty is specific to UC-mMIMO. The coherent combination across the APs of a cluster, i.e., the sum over $\mathcal{C}_{k'}$ inside each $a_{k,k'}(\mathbf{P})$ in~\eqref{rate_function_cell_free}, couples the powers of distinct APs serving the same UE through cross products $\sqrt{p_{k,b_1} p_{k,b_2}}$. These products are absent from the rate expressions of a conventional network, in which each UE is served by a single node, and we show in Section~\ref{sec:reformulation} that they defeat the direct application of the approach of \cite{PercI_10466635}.

Finally, the SLqP utility is a \textit{global} objective: identifying the lowest $K_q$ rates requires comparing every UE against every other, so an AP acting on local channel knowledge alone cannot determine whether the UEs it serves belong to the percentile set, and no purely local rule can optimize network percentile utility. We therefore assume throughout that the CU aggregates the requisite channel statistics and coordinates the allocation centrally, and we quantify the fronthaul cost of this assumption in Section~\ref{sec:complexity}.

\section{Proposed Approach}\label{sec:proposed_approach}

We convert Problem~\ref{SLqP_original_problem} into an equivalent block-concave form, thereby enabling cyclic maximization to a stationary point, in three steps. First, the quadratic fractional transform (QFT) decouples the numerator from the denominator of each SINR at the cost of one auxiliary variable per UE. Second, a change of variables on the powers restores the concavity of the transformed rates in the power block; this step is specific to the coherent joint transmission of UC-mMIMO and, as we show in Section~\ref{sec:benchmark_architectures}, is not required in the single-node setting of \cite{PercI_10466635}. Third, an epigraph formulation exposes the SLqP structure to a disciplined convex solver while side-stepping the combinatorial number of non-convex constraints discussed above. Throughout, the SLqP utility is optimized directly; at no stage is it smoothed, relaxed or replaced by a proxy.

\subsection{The Quadratic Fractional Transform}\label{sec:QFT}

Our approach utilizes the QFT, originally proposed by Benson~\cite{benson2004global} and subsequently exploited by Shen and Yu~\cite{8314727} for weighted sum-rate maximization.

\begin{lemma}\label{lemma:QFT}
Let $A(\mathbf{p}): \mathbb{R}^{N} \mapsto \mathbb{R}_{+}$, $B(\mathbf{p}): \mathbb{R}^{N} \mapsto \mathbb{R}_{++}$ and $\mathcal{P} \subseteq \mathbb{R}^{N}$. Then the fractional problem $\max_{\mathbf{p} \in \mathcal{P}} A(\mathbf{p})/B(\mathbf{p})$ is equivalent, in the sense of both the optimal variables and the optimal objective value, to the problem of maximizing the auxiliary objective
\begin{align}\label{eq:QFT_auxiliary_objective}
2x\sqrt{A\left(\mathbf{p}\right)}-x^{2}B\left(\mathbf{p}\right)
\end{align}
over $\mathbf{p} \in \mathcal{P}$ and the auxiliary variable $x \in \mathbb{R}$.
\end{lemma}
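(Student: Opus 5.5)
The approach is to eliminate the auxiliary variable $x$ by partial maximization. We then show that what remains, as a function of $\mathbf{p}$ alone, is exactly the original ratio. Fix any $\mathbf{p} \in \mathcal{P}$ and regard $g(x,\mathbf{p}) = 2x\sqrt{A(\mathbf{p})} - x^{2}B(\mathbf{p})$ as a function of $x \in \mathbb{R}$. Because $B(\mathbf{p}) > 0$, it is a strictly concave quadratic in $x$. Setting its derivative $2\sqrt{A(\mathbf{p})} - 2xB(\mathbf{p})$ to zero gives the unique maximizer
\begin{align*}
x^{\star}(\mathbf{p}) = \frac{\sqrt{A(\mathbf{p})}}{B(\mathbf{p})}.
\end{align*}
The square root is well defined because $A(\mathbf{p}) \ge 0$. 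Substituting back yields $g(x^{\star}(\mathbf{p}),\mathbf{p}) = 2A(\mathbf{p})/B(\mathbf{p}) - A(\mathbf{p})/B(\mathbf{p}) = A(\mathbf{p})/B(\mathbf{p})$. Hence $\max_{x\in\mathbb{R}} g(x,\mathbf{p}) = A(\mathbf{p})/B(\mathbf{p})$ for every feasible $\mathbf{p}$.

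Next, write the joint maximization as an iterated supremum:
\begin{align*}
\sup_{\mathbf{p}\in\mathcal{P},\,x\in\mathbb{R}} g(x,\mathbf{p}) = \sup_{\mathbf{p}\in\mathcal{P}} \sup_{x\in\mathbb{R}} g(x,\mathbf{p}) = \sup_{\mathbf{p}\in\mathcal{P}} \frac{A(\mathbf{p})}{B(\mathbf{p})}.
\end{align*}
This gives equality of the optimal values, including the case where neither supremum is attained.

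For the optimal variables we argue both directions. If $\mathbf{p}^{\star}$ maximizes $A/B$, then $(x^{\star}(\mathbf{p}^{\star}),\mathbf{p}^{\star})$ attains the common value and therefore maximizes $g$. Conversely, suppose $(\hat{x},\hat{\mathbf{p}})$ maximizes $g$. Then
\begin{align*}
\frac{A(\hat{\mathbf{p}})}{B(\hat{\mathbf{p}})} \ge g(\hat{x},\hat{\mathbf{p}}) = \sup_{\mathbf{p}\in\mathcal{P}} \frac{A(\mathbf{p})}{B(\mathbf{p})},
\end{align*}
so $\hat{\mathbf{p}}$ maximizes the ratio. Moreover, $\hat{x} = x^{\star}(\hat{\mathbf{p}})$ by uniqueness of the maximizer of the strictly concave quadratic.

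No step is a real obstacle. The only care needed is bookkeeping: the hypotheses $A \ge 0$ and $B > 0$ are used precisely to make $\sqrt{A}$ meaningful and the quadratic strictly concave. We also state equivalence through the iterated supremum so that it holds whether or not maxima exist.
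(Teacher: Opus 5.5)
Your proposal is correct and follows the paper's proof: both maximize the concave quadratic in $x$ at $x^{\star}=\sqrt{A(\mathbf{p})}/B(\mathbf{p})$ and substitute back to recover $A(\mathbf{p})/B(\mathbf{p})$. Your iterated-supremum step and the converse argument for the optimal variables make explicit the equivalence claims that the paper's short proof leaves implicit.
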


\begin{proof}
The auxiliary objective in~\eqref{eq:QFT_auxiliary_objective} is a concave quadratic in $x$, maximized at $x^{\mathrm{opt}}=\sqrt{A(\mathbf{p})}/B(\mathbf{p})$, and substituting this value back into~\eqref{eq:QFT_auxiliary_objective} recovers the original objective, since $2\sqrt{A(\mathbf{p})}\sqrt{A(\mathbf{p})}/B(\mathbf{p}) - A(\mathbf{p})B(\mathbf{p})/B(\mathbf{p})^{2} = A(\mathbf{p})/B(\mathbf{p})$.
\end{proof}

Two consequences of Lemma~\ref{lemma:QFT} are used repeatedly. Since $x^{\mathrm{opt}}$ maximizes the auxiliary objective, for any $\mathbf{p}$ the auxiliary objective lower bounds the ratio,
\begin{align}\label{eq:QFT_lower_bound}
2x\sqrt{A\left(\mathbf{p}\right)}-x^{2}B\left(\mathbf{p}\right) \le \frac{A\left(\mathbf{p}\right)}{B\left(\mathbf{p}\right)},\quad \forall x
\end{align}
and the bound holds with equality at $x = x^{\mathrm{opt}}$.

Two features of Problem~\ref{SLqP_original_problem} prevent the direct application of Lemma~\ref{lemma:QFT}: the ratios to be transformed appear inside a logarithm, and the resulting logarithms are aggregated by the non-smooth SLqP utility rather than by a plain sum. The following result establishes that neither feature obstructs the transform, and that the reason is precisely the monotonicity in Lemma~\ref{lem:slqp}(iii).

\begin{theorem}\label{thm:equivalence}
Let $\phi: \mathbb{R}^{K} \mapsto \mathbb{R}$ be non-decreasing in each of its arguments, and let $\hat{r}_{k}$ denote the auxiliary rate obtained from $r_{k}$ in (\ref{rate_function_cell_free}) by replacing its SINR with the auxiliary objective in~\eqref{eq:QFT_auxiliary_objective}. Then
\begin{align}\label{eq:aggregator_bound}
\phi\big(\hat{\mathbf{r}}(\mathbf{X},\mathbf{P})\big) \le \phi\big(\mathbf{r}(\mathbf{P})\big),\quad \forall \mathbf{X}
\end{align}
with equality when $\mathbf{X} = \mathbf{X}^{\mathrm{opt}}(\mathbf{P})$, obtained by applying the maximizer $x^{\mathrm{opt}}$ of Lemma~\ref{lemma:QFT} to each UE. Consequently, $\max_{\mathbf{X},\mathbf{P}}\, \phi(\hat{\mathbf{r}}(\mathbf{X},\mathbf{P})) = \max_{\mathbf{P}}\, \phi(\mathbf{r}(\mathbf{P}))$, and the two problems are equivalent in the sense of the optimal powers and objective values.
\end{theorem}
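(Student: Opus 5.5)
The proof works componentwise first and only then passes through the aggregator $\phi$. Fix $\mathbf{P}$ in the feasible set of Problem~\ref{SLqP_original_problem}. For each UE $k$, set $A_k(\mathbf{P}) = |a_{k,k}(\mathbf{P})|^2 \ge 0$ and $B_k(\mathbf{P}) = \sum_{k' \in \mathcal{K}_{-k}} |a_{k,k'}(\mathbf{P})|^2 + \sigma^2 \ge \sigma^2 > 0$. These satisfy the hypotheses of Lemma~\ref{lemma:QFT}. The auxiliary rate is
\[
\hat{r}_k(x_k,\mathbf{P}) = \log\bigl(1 + 2x_k\sqrt{A_k(\mathbf{P})} - x_k^2 B_k(\mathbf{P})\bigr).
\]
By~\eqref{eq:QFT_lower_bound}, its argument is at most $1 + A_k/B_k$, with equality at $x_k^{\mathrm{opt}} = \sqrt{A_k}/B_k$. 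Since $\log$ is increasing, $\hat{r}_k(x_k,\mathbf{P}) \le r_k(\mathbf{P})$ for every $x_k$, with equality at $x_k^{\mathrm{opt}}$.

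One technical point needs care. For some $x_k$ the argument $1 + 2x_k\sqrt{A_k} - x_k^2 B_k$ can be non-positive, where $\log$ is undefined. I would adopt the convention $\log(t) = -\infty$ for $t \le 0$, or equivalently restrict $\mathbf{X}$ to the set where every argument is positive. I would also require $\phi$ to be extended-valued and non-decreasing on $[-\infty,\infty)^K$, and I would say so explicitly. This is the only real obstacle. Once it is settled, the componentwise inequality $\hat{\mathbf{r}}(\mathbf{X},\mathbf{P}) \le \mathbf{r}(\mathbf{P})$ (entrywise) holds for all $\mathbf{X}$.

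Next comes the aggregator. Because $\phi$ is non-decreasing in each argument, an entrywise inequality between its inputs carries over to its outputs. To see this, change the arguments one coordinate at a time from $\hat{r}_k$ to $r_k$; each step cannot decrease $\phi$. This gives~\eqref{eq:aggregator_bound}. At $\mathbf{X} = \mathbf{X}^{\mathrm{opt}}(\mathbf{P})$ every component is equal, so $\phi(\hat{\mathbf{r}}) = \phi(\mathbf{r})$. By Lemma~\ref{lem:slqp}(iii), the choice $\phi = f_{K_q}$ is admissible.

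For the equivalence, I would use two inequalities. First, for any feasible $(\mathbf{X},\mathbf{P})$, we have $\phi(\hat{\mathbf{r}}(\mathbf{X},\mathbf{P})) \le \phi(\mathbf{r}(\mathbf{P})) \le \max_{\mathbf{P}} \phi(\mathbf{r}(\mathbf{P}))$. Second, for any $\mathbf{P}$, $\max_{\mathbf{X}} \phi(\hat{\mathbf{r}}(\mathbf{X},\mathbf{P})) \ge \phi(\hat{\mathbf{r}}(\mathbf{X}^{\mathrm{opt}}(\mathbf{P}),\mathbf{P})) = \phi(\mathbf{r}(\mathbf{P}))$. Together these show the two optimal values coincide; here $\max$ can be read as $\sup$ if attainment is not assumed, though attainment holds by compactness of the power set and continuity of $r_k$. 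For the optimal powers, suppose $\mathbf{P}^\star$ is optimal for the original problem. Then $(\mathbf{X}^{\mathrm{opt}}(\mathbf{P}^\star),\mathbf{P}^\star)$ achieves the common value, so it is optimal for the transformed problem. Conversely, suppose $(\mathbf{X}^\star,\mathbf{P}^\star)$ is optimal for the transformed problem. Then the chain
\[
\text{opt} = \phi(\hat{\mathbf{r}}(\mathbf{X}^\star,\mathbf{P}^\star)) \le \phi(\mathbf{r}(\mathbf{P}^\star)) \le \text{opt}
\]
forces $\mathbf{P}^\star$ to be optimal for the original problem.
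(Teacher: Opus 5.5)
Your proof is correct and follows the same route as the paper. The bound \eqref{eq:QFT_lower_bound} gives $\hat{r}_k \le r_k$ componentwise with equality at $x_k^{\mathrm{opt}}$, monotonicity of $\phi$ carries this to the aggregate, and attainment of the bound for every $\mathbf{P}$ makes the two maxima equal. Two points the paper leaves implicit, and which you handle soundly, are the domain of the logarithm (where $1+2x_k\sqrt{A_k}-x_k^2B_k\le 0$) and the argument in both directions for the optimal powers.
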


\begin{proof}
From~\eqref{eq:QFT_lower_bound}, the auxiliary objective never exceeds the SINR it replaces. Since $\log(1+\cdot)$ is non-decreasing, $\hat{r}_{k} \le r_{k}$ componentwise for every $\mathbf{X}$, and since $\phi$ is non-decreasing in each argument, the inequality is preserved by $\phi$, which establishes~\eqref{eq:aggregator_bound}. Equality at $\mathbf{X}^{\mathrm{opt}}$ follows from the tightness of the bound at $x^{\mathrm{opt}}$. Since the upper bound is attained for every $\mathbf{P}$, the two maximizations coincide.
\end{proof}

By Lemma~\ref{lem:slqp}(iii), the SLqP utility satisfies the hypothesis of Theorem~\ref{thm:equivalence}. We may therefore transform the rates individually and aggregate them subsequently, with no loss of equivalence.

\subsection{Reformulating the SLqP Problem}\label{sec:reformulation}

Applying the QFT to the SINR of each rate in~\eqref{rate_function_cell_free}, with one auxiliary variable $x_{k}$ per UE collected in $\mathbf{X} = [x_{1},\ldots,x_{K}]^{T}$, yields the auxiliary problem
\begin{align}
\underset{\mathbf{X},\mathbf{P}}{\mathrm{maximize}}\quad
& f_{K_{q}}\left(\hat{r}_{1}\left(\mathbf{X},\mathbf{P}\right),\ldots,\hat{r}_{K}\left(\mathbf{X},\mathbf{P}\right)\right)
\tag{P2}\label{eq:reformulated_problem}
\\
\mathrm{subject\ to}\quad
& p_{k,b}\geq0,\quad k=1,\ldots,K,\; b=1,\ldots,B
\nonumber
\\
& \sum_{k\in\mathcal{K}_{b}}p_{k,b}\leq P_{\mathrm{max}},\quad b=1,\ldots,B
\nonumber
\end{align}
where the transformed rate is
\begin{align}\label{eq:rhat}
\hat{r}_{k}\left(\mathbf{X},\mathbf{P}\right)
&=
\mathrm{log}\bigg(1+2x_{k}\big|a_{k,k}(\mathbf{P})\big|
\nonumber \\
&
\qquad -
x_{k}^{2}\Big(\sum_{k'\in\mathcal{K}_{-k}}\big|a_{k,k'}(\mathbf{P})\big|^{2}+\sigma^{2}\Big)\bigg)
\end{align}
By Theorem~\ref{thm:equivalence}, Problem~\ref{eq:reformulated_problem} is equivalent to Problem~\ref{SLqP_original_problem}; the auxiliary variables enlarge the search space but alter neither the optimal value nor the optimal powers. With the powers held fixed, an optimal choice of the auxiliary variables follows from Lemma~\ref{lemma:QFT} as
\begin{align}\label{eq:X_opt}
x_{k}^{\mathrm{opt}}=\frac{\left|a_{k,k}(\mathbf{P})\right|}{\sum_{k'\in\mathcal{K}_{-k}}\left|a_{k,k'}(\mathbf{P})\right|^{2}+\sigma^{2}},\ \forall k
\end{align}
i.e., the ratio of the desired signal amplitude of UE $k$ to the total interference-plus-noise power it experiences, a per-UE weight reflecting how favourable that UE's current transmission conditions are. Each $x_{k}^{\mathrm{opt}}$ depends on the powers of all UEs, which is the mechanism by which the coupling of the network-wide objective enters the algorithm. As observed in \cite{PercI_10466635}, the SLqP utility is not strictly concave, so other choices of $\mathbf{X}$ may attain the same auxiliary objective value; \eqref{eq:X_opt} is nonetheless the choice that preserves equivalence with the original problem, and we adopt it throughout.

\subsubsection{Where the approach of \cite{PercI_10466635} breaks down}
In \cite{PercI_10466635}, the auxiliary objective is concave in the powers once the auxiliary variables are held fixed, and the cyclic maximization follows immediately. This is \emph{not} the case in~\eqref{eq:rhat}. The modulus $|a_{k,k}(\mathbf{P})|$ in the first term of~\eqref{eq:rhat} is, by the MRT structure in~\eqref{eq:MRT}, equal to $\sum_{b\in\mathcal{C}_{k}}\sqrt{p_{k,b}}\,\Vert\mathbf{h}_{k,b}\Vert_{2}$, a sum of square roots of the powers and hence concave; the difficulty lies in the interference term. Expanding $|a_{k,k'}(\mathbf{P})|^{2} = |\sum_{b'\in\mathcal{C}_{k'}}\sqrt{p_{k',b'}}\,\mathbf{h}_{k,b'}^{H}\mathbf{v}_{k',b'}|^{2}$ reveals cross products of the form $\sqrt{p_{k',b_{1}}p_{k',b_{2}}}$ between distinct APs $b_1 \ne b_2$ serving the same UE $k'$, each weighted by $\mathrm{Re}\{(\mathbf{h}_{k,b_{1}}^{H}\mathbf{v}_{k',b_{1}})(\mathbf{h}_{k,b_{2}}^{H}\mathbf{v}_{k',b_{2}})^{*}\}$, a coefficient of either sign. The cross products are geometric means and therefore concave in the powers, so the interference term is, in general, neither convex nor concave in $\mathbf{P}$, and the composition rule that underlies the approach in \cite{PercI_10466635} cannot be applied. These cross products are the mathematical signature of coherent joint transmission: they are absent whenever each UE is served by a single node, which is why the difficulty does not arise in the cellular setting.

\subsubsection{Restoring concavity in the power block}
To remedy this, we introduce the one-to-one mapping
\begin{align}\label{eq:theta_map}
\theta_{k,b}\doteq\sqrt{p_{k,b}},\quad k=1,\ldots,K,\; b=1,\ldots,B
\end{align}
which is a bijection on the non-negative orthant and therefore incurs no loss of equivalence. The variable $\theta_{k,b}$ is the transmit amplitude of AP $b$ toward UE $k$, i.e., the scale of the beamformer $\mathbf{v}_{k,b}$, so that~\eqref{eq:theta_map} is the parameterization in which the beamforming problems of \cite{PercII_10458868} are naturally posed. Collecting the amplitudes in $\mathbf{\Theta}$ and writing
\begin{align}\label{eq:abar}
\bar{a}_{k,k'}(\mathbf{\Theta}) \doteq \sum_{b \in\mathcal{C}_{k'}} \theta_{k',b}\, {\bf h}_{k,b}^H {\bf v}_{k',b}
\end{align}
for the precoded channel expressed in the amplitudes, so that $\bar{a}_{k,k'}(\mathbf{\Theta}) = a_{k,k'}(\mathbf{P})$ under~\eqref{eq:theta_map}, the transformed rate becomes
\begin{align}\label{eq:rcheck}
\check{r}_{k}\left(\mathbf{X},\mathbf{\Theta}\right)
&=
\mathrm{log}\bigg(1+2x_{k}\Big(\sum_{b\in\mathcal{C}_{k}}\theta_{k,b}\left\Vert \mathbf{h}_{k,b}\right\Vert _{2}\Big)
\nonumber \\
&
\qquad - x_{k}^{2}\Big(\sum_{k'\in\mathcal{K}_{-k}}\big|\bar{a}_{k,k'}(\mathbf{\Theta})\big|^{2}+\sigma^{2}\Big)\bigg)
\end{align}
where the first term follows from the MRT structure in~\eqref{eq:MRT}, since $|a_{k,k}(\mathbf{P})| = |\sum_{b\in\mathcal{C}_{k}}\sqrt{p_{k,b}}\,\mathbf{h}_{k,b}^{H}\mathbf{h}_{k,b}/\Vert \mathbf{h}_{k,b}\Vert _{2}| = \sum_{b\in\mathcal{C}_{k}}\theta_{k,b}\Vert \mathbf{h}_{k,b}\Vert _{2}$, every summand being non-negative. The modulus has been eliminated, what was a sum of square roots is now linear in $\mathbf{\Theta}$, and, crucially, the interference term is now the squared modulus of $\bar{a}_{k,k'}(\mathbf{\Theta})$, an \emph{affine} function of $\mathbf{\Theta}$.

The mapping in~\eqref{eq:theta_map} must also be propagated to the constraints, under which the per-AP budget becomes $\sum_{k\in\mathcal{K}_{b}}\theta_{k,b}^{2}\leq P_{\mathrm{max}}$ for $b=1,\ldots,B$. This is a convex quadratic constraint, equivalently the second-order cone constraint $\big\Vert \big[\theta_{k,b}\big]_{k \in \mathcal{K}_{b}}\big\Vert_{2}\le\sqrt{P_{\mathrm{max}}}$, and is retained exactly by a convex solver. We note that this constraint is not equivalent to the linear constraint $\sum_{k\in\mathcal{K}_{b}}\theta_{k,b}\le\sqrt{P_{\mathrm{max}}}$, which is strictly tighter whenever an AP serves more than one UE; imposing the latter would shrink the feasible set and forfeit equivalence with Problem~\ref{SLqP_original_problem}. We thus obtain the equivalent problem
\begin{align}
\underset{\mathbf{\Theta}}{\mathrm{maximize}}\quad
& f_{K_{q}}\left(\check{r}_{1}\left(\mathbf{X},\mathbf{\Theta}\right),\ldots,\check{r}_{K}\left(\mathbf{X},\mathbf{\Theta}\right)\right)
\tag{P3}\label{eq:theta_problem}
\\
\mathrm{subject\ to}\quad
& \theta_{k,b}\geq0,\quad k=1,\ldots,K,\; b=1,\ldots,B
\nonumber
\\
& \sum_{k\in\mathcal{K}_{b}}\theta_{k,b}^{2}\leq P_{\mathrm{max}},\quad b=1,\ldots,B
\nonumber
\end{align}

\begin{lemma}[Block-concavity]\label{lem:block_concavity}
The objective of Problem~\ref{eq:theta_problem} is concave in the transformed power variables $\mathbf{\Theta}$ when the auxiliary variables $\mathbf{X}$ are held fixed.
\end{lemma}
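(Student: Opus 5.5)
The proof is a composition argument in three layers: inner argument, logarithm, SLqP aggregator. Fix $\mathbf{X}$ and consider, for each UE $k$, the inner argument of the logarithm in~\eqref{eq:rcheck},
\begin{align*}
g_k(\mathbf{\Theta}) \doteq 1+2x_{k}\sum_{b\in\mathcal{C}_{k}}\theta_{k,b}\left\Vert \mathbf{h}_{k,b}\right\Vert _{2} - x_{k}^{2}\Big(\sum_{k'\in\mathcal{K}_{-k}}\big|\bar{a}_{k,k'}(\mathbf{\Theta})\big|^{2}+\sigma^{2}\Big).
\end{align*}
The first step shows that $g_k$ is concave in $\mathbf{\Theta}$. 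The signal term is linear in $\mathbf{\Theta}$ for any real $x_k$, whatever its sign, so it is both concave and convex. Each $\bar{a}_{k,k'}(\mathbf{\Theta})$ in~\eqref{eq:abar} is a complex-valued linear function of the real vector $\mathbf{\Theta}$. Writing it as $\mathbf{c}^T\mathbf{\Theta}$ with $\mathbf{c}\in\mathbb{C}^{N}$, we have $|\bar{a}_{k,k'}|^2 = (\mathrm{Re}\,\mathbf{c}^T\mathbf{\Theta})^2+(\mathrm{Im}\,\mathbf{c}^T\mathbf{\Theta})^2$. This is the squared Euclidean norm of an affine map, hence convex, with Hessian $2(\mathrm{Re}\,\mathbf{c}\,\mathrm{Re}\,\mathbf{c}^T+\mathrm{Im}\,\mathbf{c}\,\mathrm{Im}\,\mathbf{c}^T)\succeq 0$. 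A sum of such terms plus the constant $\sigma^2$ is convex, and multiplying by $-x_k^2\le 0$ makes it concave. Thus $g_k$ is a linear function plus a concave function, and is concave. This is exactly where~\eqref{eq:theta_map} pays off: the sign-indefinite cross products $\sqrt{p_{k',b_1}p_{k',b_2}}$ become bilinear terms $\theta_{k',b_1}\theta_{k',b_2}$ inside a PSD quadratic form, so no sign assumption on their coefficients is needed.

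The second step composes with the logarithm. Since $\log(\cdot)$ is concave and non-decreasing, the composition rule of~\cite{boyd2004convex} makes $\check{r}_k=\log\circ g_k$ concave on the set where $g_k>0$, which is where the objective is finite. I expect this domain issue to be the only real subtlety. The set $\{\mathbf{\Theta}: g_k(\mathbf{\Theta})>0\}$ is a superlevel set of a concave function and is therefore convex. Adopting the extended-value convention $\log t=-\infty$ for $t\le 0$ makes $\check{r}_k$ a proper concave extended-valued function on all of $\mathbb{R}^N$. I would also remark that the point where $\mathbf{X}$ was computed via~\eqref{eq:X_opt} lies in this domain: there $g_k$ equals $1+\mathrm{SINR}_k\ge 1$, so the domain is nonempty.

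The third step applies the aggregator. By Lemma~\ref{lem:slqp}(i) and (iii), $f_{K_q}$ is concave and non-decreasing in each argument. A concave, componentwise non-decreasing function of concave functions is concave, again by the vector composition rule in~\cite{boyd2004convex}. This can also be seen directly from~\eqref{eq:SLqP_2}: $f_{K_q}(\check{\mathbf{r}})=\min_{\mathbf{a}\in\mathcal{A}_{K_q}}\sum_k a_k\check{r}_k$ is a pointwise minimum of nonnegative combinations of concave functions, and each such combination is concave, as is their minimum. Hence the objective of Problem~\ref{eq:theta_problem} is concave in $\mathbf{\Theta}$ for fixed $\mathbf{X}$. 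Finally, the constraints of Problem~\ref{eq:theta_problem} are convex (linear, and convex quadratic), so the power block is a concave maximization over a convex set.
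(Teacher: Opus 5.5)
Your proposal is correct and follows the paper's proof essentially step for step: linearity of the signal term, convexity of the squared modulus of the affine map $\bar{a}_{k,k'}(\mathbf{\Theta})$ (made concave by the factor $-x_k^2$), concavity and monotonicity of the logarithm, and the composition rule with the concave, non-decreasing $f_{K_q}$. Your extended-value treatment of the logarithm on $\{\mathbf{\Theta}: g_k(\mathbf{\Theta})>0\}$ and your pointwise-minimum reading of the last step are added rigor that the paper's proof leaves implicit, not a different route.
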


\begin{proof}
We apply the composition rule~\cite{boyd2004convex}: $h(g_{1}(\mathbf{x}),\ldots,g_{K}(\mathbf{x}))$ is concave in $\mathbf{x}$ provided that $h$ is concave and non-decreasing in each argument and each $g_{k}$ is concave. In~\eqref{eq:rcheck}, the term $2x_{k}\sum_{b}\theta_{k,b}\Vert\mathbf{h}_{k,b}\Vert_{2}$ is linear, hence concave, in $\mathbf{\Theta}$. Each interference term $|\bar{a}_{k,k'}(\mathbf{\Theta})|^{2}$ is the squared modulus of an affine function of $\mathbf{\Theta}$ and is therefore convex, so its negative scaled by $x_{k}^{2} \ge 0$ is concave. The argument of the logarithm is thus concave, and since the logarithm is concave and non-decreasing, $\check{r}_{k}(\mathbf{X},\mathbf{\Theta})$ is concave in $\mathbf{\Theta}$. Finally, by Lemma~\ref{lem:slqp}(i) and (iii), $f_{K_{q}}$ is concave and non-decreasing in each component. Identifying $f_{K_{q}}$ with $h$ and $\check{r}_{k}$ with $g_{k}$ completes the proof.
\end{proof}

\subsubsection{Epigraph form for a disciplined convex solver}
Problem~\ref{eq:theta_problem} is concave in $\mathbf{\Theta}$ for fixed $\mathbf{X}$, yet its objective remains non-smooth and is not expressed in a form that a disciplined convex programming parser can verify. We therefore introduce a slack variable $t_{k}$ per UE, collected in $\mathbf{t}=[t_{1},\ldots,t_{K}]^{T}$, and recast the rates as constraints:
\begin{align}
\underset{\mathbf{\Theta},\mathbf{t}}{\mathrm{maximize}}\quad
& f_{K_{q}}\left(\mathbf{t}\right)
\tag{P4}\label{eq:cvx_problem}
\\
\mathrm{subject\ to}\quad
& \theta_{k,b}\geq0,\quad k=1,\ldots,K,\; b=1,\ldots,B
\nonumber
\\
& \sum_{k\in\mathcal{K}_{b}}\theta_{k,b}^{2}\leq P_{\mathrm{max}},\quad b=1,\ldots,B
\nonumber
\\
& t_{k}\leq\check{r}_{k}\left(\mathbf{X},\mathbf{\Theta}\right),\quad k=1,\ldots,K
\nonumber
\end{align}
where the rate constraints are expressed in solver-ready form by exponentiating both sides, i.e.,
\begin{align}\label{eq:exp_constraint}
&1+2x_{k}\Big(\sum_{b\in\mathcal{C}_{k}}\theta_{k,b}\left\Vert \mathbf{h}_{k,b}\right\Vert _{2}\Big)
\nonumber \\
&\qquad
-x_{k}^{2}\Big(\sum_{k'\in\mathcal{K}_{-k}}\big|\bar{a}_{k,k'}(\mathbf{\Theta})\big|^{2}+\sigma^{2}\Big)
\geq\mathrm{exp}\left(t_{k}\right)
\end{align}
for $k=1,\ldots,K$. The left-hand side of~\eqref{eq:exp_constraint} is concave in $\mathbf{\Theta}$ and the right-hand side is convex in $t_{k}$, so the constraint is a valid convex restriction, and Problem~\ref{eq:cvx_problem} is a concave maximization that may be solved using CVX~\cite{cvx} or the subgradient method; its objective is available in the standard CVX library as the \texttt{sum\_smallest} atom. The relaxation of the rates into the constraints $t_{k}\leq\check{r}_{k}(\mathbf{X},\mathbf{\Theta})$ is tight at the optimum: by Lemma~\ref{lem:slqp}(iii) there is no incentive to leave any $t_{k}$ strictly below its upper bound, so every such constraint holds with equality at a maximizer and the optimal values of Problems~\ref{eq:cvx_problem} and~\ref{eq:theta_problem} coincide. Moreover, the combinatorial structure of the SLqP utility is handled implicitly: the solver at no point enumerates the $\binom{K}{K_{q}}$ selection vectors of~\eqref{eq:SLqP_2}, but operates on the concave atom directly, and the active set is revealed by the solution rather than searched over. This is precisely the combinatorial explosion that rendered the smoothed reformulation of Section~\ref{sec:problem_formulation} intractable.

\subsection{The Proposed Algorithm}\label{sec:algorithm}

\begin{algorithm}[t]
		\SetAlgoLined
		\SetInd{0.1em}{1em}
		\caption{QFT Algorithm for SLqP Rate Maximization in UC-mMIMO Networks}
        \label{algorithm:percentileAlgo}
		\KwIn{channel vectors $\{{\bf h}_{k,b}\}$, serving sets $\{\mathcal{C}_{k}\}$, percentile $q$, per-AP budget $P_{\max}$, tolerance $\epsilon$, iteration cap $I_{\max}$.}
		\KwOut{power allocation ${\bf P}^{\star}=\{p^{\star}_{k,b}\}$.}
		compute the percentile number $K_{q}$ (Section~\ref{sec:problem_formulation})\;
		compute the MRT directions ${\bf v}_{k,b}$ using~\eqref{eq:MRT} and the cross terms ${\bf h}_{k,b}^{H}{\bf v}_{k',b}$\;
		initialize $\theta_{k,b}\left[0\right]\leftarrow\sqrt{P_{\max}/|\mathcal{K}_{b}|}$ for all $b$ and all $k\in\mathcal{K}_{b}$\;
		evaluate $u\left[0\right]\leftarrow f_{K_{q}}\big({\bf r}({\bf \Theta}\left[0\right])\big)$ and set $i\leftarrow0$\;
		\Repeat{$\big|u\left[i\right]-u\left[i-1\right]\big|\le\epsilon\,\big|u\left[i-1\right]\big|$ \KwSty{or} $i=I_{\max}$}{
			$i\leftarrow i+1$\;
			update ${\bf X}\left[i\right]$ using~\eqref{eq:X_opt} for fixed ${\bf \Theta}\left[i-1\right]$\;
			update ${\bf \Theta}\left[i\right]$ by solving Problem~\ref{eq:cvx_problem} for fixed ${\bf X}\left[i\right]$\;
			evaluate $u\left[i\right]\leftarrow f_{K_{q}}\big({\bf r}({\bf \Theta}\left[i\right])\big)$\;
		}
		\Return $p^{\star}_{k,b}\leftarrow\big(\theta_{k,b}\left[i\right]\big)^{2}$\;
	\end{algorithm}

The preceding development yields a problem that is concave in $\mathbf{X}$ for fixed $\mathbf{\Theta}$ and, by Lemma~\ref{lem:block_concavity}, concave in $\mathbf{\Theta}$ for fixed $\mathbf{X}$, though not jointly concave in both. This block-concave structure admits the cyclic maximization summarized in Algorithm~\ref{algorithm:percentileAlgo}: the closed-form auxiliary update re-measures the position of each UE relative to the rest of the network and, by Theorem~\ref{thm:equivalence}, restores equality between the auxiliary and original objectives, and the power update then reallocates the transmit powers in light of that measurement. We initialize with equal power allocation, $p_{k,b}\left[0\right]=P_{\max}/|\mathcal{K}_{b}|$, which is feasible by construction, requires no channel knowledge and favours no UE a priori; since Problem~\ref{SLqP_original_problem} is non-convex, the limit point depends on the starting point, and this choice is reproducible. We terminate when the relative improvement in the SLqP utility falls below a tolerance $\epsilon$ or when an iteration cap $I_{\max}$ is reached, which is appropriate because the sequence $\{u[i]\}$ is non-decreasing and bounded above (Section~\ref{sec:convergence}); $u[i]$ is evaluated using the true rates $\mathbf{r}(\mathbf{\Theta}[i])$, so the reported value is always an achievable utility. Finally, the auxiliary variables are updated before the powers within each iteration, which guarantees that the auxiliary objective is tight at the start of every power update, as the convergence argument requires.

\begin{remark}[Implicit refinement of the serving clusters]\label{rem:refinement}
Since $f_{K_{q}}$ depends only on the $K_{q}$ smallest rates, a UE whose rate lies above the $K_{q}^{\rm th}$ order statistic has, by Lemma~\ref{lem:slqp}(iv), zero subgradient weight, and power assigned to it enters the optimization only through the interference it inflicts upon the UEs that do contribute. The power update therefore has an incentive to reduce $\theta_{k,b}$ on such links, and an AP $b$ may transmit negligible power to a UE $k$ even though $b \in \mathcal{C}_{k}$ under the geometric rule~\eqref{eq:C_k}: the algorithm refines the user-centric clusters in an interference-aware manner as a by-product, without any explicit cluster-selection variable. The refinement is self-limiting, since a UE whose rate falls among the lowest $K_{q}$ enters the active set and further reductions are penalized directly by the objective.
\end{remark}

\subsection{Convergence}\label{sec:convergence}

We now establish that Algorithm~\ref{algorithm:percentileAlgo} is non-decreasing in the SLqP rate objective and converges to a directional stationary point. The analysis must account for the change of variables in~\eqref{eq:theta_map}, which is absent from \cite{PercI_10466635}: the minorant that drives the algorithm is constructed in the transformed domain, and we must show that it is tangent to the non-smooth SLqP objective, including in its directional derivatives.

\begin{theorem}\label{thm:monotone}
Algorithm~\ref{algorithm:percentileAlgo} is non-decreasing in the objective of Problem~\ref{SLqP_original_problem} after each iteration, and the sequence of objective values converges.
\end{theorem}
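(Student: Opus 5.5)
The argument is a standard minorize--maximize chain, carried out in the amplitude domain. Write $F(\mathbf{\Theta}) \doteq f_{K_q}(\mathbf{r}(\mathbf{\Theta}))$ for the objective of Problem~\ref{SLqP_original_problem} expressed through~\eqref{eq:theta_map}. Since $\bar{a}_{k,k'}(\mathbf{\Theta}) = a_{k,k'}(\mathbf{P})$, we have $u[i] = F(\mathbf{\Theta}[i])$. Similarly write $G(\mathbf{X},\mathbf{\Theta}) \doteq f_{K_q}(\check{r}_1(\mathbf{X},\mathbf{\Theta}),\ldots,\check{r}_K(\mathbf{X},\mathbf{\Theta}))$. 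The goal is the chain
\begin{align}
F(\mathbf{\Theta}[i-1]) = G(\mathbf{X}[i],\mathbf{\Theta}[i-1]) \le G(\mathbf{X}[i],\mathbf{\Theta}[i]) \le F(\mathbf{\Theta}[i]).
\end{align}

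The equality comes from the $\mathbf{X}$-update. Because $\check{r}_k(\mathbf{X},\mathbf{\Theta}) = \hat{r}_k(\mathbf{X},\mathbf{P})$ under~\eqref{eq:theta_map} (the MRT identity $|a_{k,k}| = \sum_b \theta_{k,b}\Vert\mathbf{h}_{k,b}\Vert_2$ was verified in Section~\ref{sec:reformulation}), Theorem~\ref{thm:equivalence} with $\phi = f_{K_q}$, admissible by Lemma~\ref{lem:slqp}(iii), gives $G(\mathbf{X}^{\rm opt}(\mathbf{\Theta}),\mathbf{\Theta}) = F(\mathbf{\Theta})$. The closed form~\eqref{eq:X_opt} evaluated at $\mathbf{\Theta}[i-1]$ is exactly $\mathbf{X}^{\rm opt}$.

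The first inequality comes from the $\mathbf{\Theta}$-update. $\mathbf{\Theta}[i-1]$ is feasible for Problem~\ref{eq:theta_problem}: the initialization satisfies $\sum_{k\in\mathcal{K}_b}\theta_{k,b}^2 = P_{\max}$, and every later iterate is the output of a feasible solve. Problem~\ref{eq:cvx_problem} is a concave program with the same optimal value as Problem~\ref{eq:theta_problem}, by the tightness of the epigraph constraints noted after~\eqref{eq:exp_constraint}. Its maximizer therefore satisfies $G(\mathbf{X}[i],\mathbf{\Theta}[i]) \ge G(\mathbf{X}[i],\mathbf{\Theta}[i-1])$.

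One technicality needs care: the constraint~\eqref{eq:exp_constraint} requires the argument of the logarithm to be positive. At $\mathbf{X}^{\rm opt}$ this argument equals $1+\mathrm{SINR} \ge 1$, so the previous iterate is strictly feasible for the rate constraints with $t_k = \check{r}_k$, and the comparison is legitimate. If the solver returns a point at which some rate constraint is slack, replacing each $t_k$ by $\check{r}_k(\mathbf{X}[i],\mathbf{\Theta}[i])$ can only raise $f_{K_q}(\mathbf{t})$, by monotonicity. So the objective of Problem~\ref{eq:cvx_problem} at its maximizer is dominated by $G(\mathbf{X}[i],\mathbf{\Theta}[i])$.

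The second inequality is again the bound~\eqref{eq:aggregator_bound}, which holds for every $\mathbf{X}$, evaluated at $\mathbf{\Theta}[i]$. Chaining the three steps gives $u[i-1] \le u[i]$.

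For convergence of the values, it remains to show that $\{u[i]\}$ is bounded above, since a monotone bounded sequence converges. Each $|a_{k,k}|^2 \le (\sum_{b\in\mathcal{C}_k}\sqrt{P_{\max}}\Vert\mathbf{h}_{k,b}\Vert_2)^2$, and the interference is nonnegative, so every SINR is at most this quantity divided by $\sigma^2$. Hence every $r_k$ is uniformly bounded over the compact feasible set, and so is $f_{K_q}(\mathbf{r}) \le \sum_k r_k$. Continuity of $f_{K_q}$ (Lemma~\ref{lem:slqp}(ii)) is not even needed for this step.

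The main obstacle I expect is the care needed in the middle inequality rather than the bookkeeping. We must ensure that the convex program is solved over a set containing the previous iterate, and that the epigraph slack does not break the comparison, since the non-smooth aggregator is handled only through $f_{K_q}(\mathbf{t})$. Invoking Lemma~\ref{lem:slqp}(iii) twice, once for tightness of $\mathbf{t}$ and once for the minorant property, should settle both points.
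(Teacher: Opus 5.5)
Your proposal is correct and follows the paper's proof essentially step for step. The paper's chain (a)--(e) is your three-step chain, except that the paper writes out the bijection steps (b) and (d) explicitly, whereas you absorb them by working in $\mathbf{\Theta}$ from the outset; the boundedness argument via the per-AP budget is the same. Your extra care over the positivity of the logarithm's argument at the previous iterate, and over possibly slack epigraph constraints, fills in details the paper only asserts: you use $f_{K_q}(\mathbf{t}[i]) \le f_{K_q}(\check{\mathbf{r}}(\mathbf{X}[i],\mathbf{\Theta}[i]))$ rather than claiming tightness at the solver output.
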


\begin{proof}
The non-decreasing nature of the algorithm follows from the chain
\begin{subequations}\label{eq:convergence_chain}
\begin{align}
f_{K_{q}}\left(\mathbf{r}\left(\mathbf{P}\left[i\right]\right)\right)
&\overset{(a)}{=}
f_{K_{q}}\left(\hat{\mathbf{r}}\left(\mathbf{X}\left[i+1\right],\mathbf{P}\left[i\right]\right)\right)
\label{eq:conv_a}
\\
&\overset{(b)}{=}f_{K_{q}}\left(\check{\mathbf{r}}\left(\mathbf{X}\left[i+1\right],\mathbf{\Theta}\left[i\right]\right)\right)
\label{eq:conv_b}
\\
&\overset{(c)}{\leq} f_{K_{q}}\left(\check{\mathbf{r}}\left(\mathbf{X}\left[i+1\right],\mathbf{\Theta}\left[i+1\right]\right)\right)
\label{eq:conv_c}
\\
&\overset{(d)}{=}f_{K_{q}}\left(\hat{\mathbf{r}}\left(\mathbf{X}\left[i+1\right],\mathbf{P}\left[i+1\right]\right)\right)
\label{eq:conv_d}
\\
&\overset{(e)}{\leq}
f_{K_{q}}\left(\mathbf{r}\left(\mathbf{P}\left[i+1\right]\right)\right)
\label{eq:conv_e}
\end{align}
\end{subequations}
where $\mathbf{P}[i]$, $\mathbf{\Theta}[i]$ and $\mathbf{X}[i]$ denote the values of these variables in the $i^{\rm th}$ iteration. Step~(a) follows from Theorem~\ref{thm:equivalence}, since $\mathbf{X}[i+1]$ is obtained from $\mathbf{P}[i]$ through the maximizer in~\eqref{eq:X_opt}, at which the auxiliary objective is tight; (b) and (d) follow since the mapping in~\eqref{eq:theta_map} is a bijection between the non-negative powers and $\mathbf{\Theta}$, so the two parameterizations describe the same point; (c) follows since $\mathbf{\Theta}[i+1]$ maximizes the concave Problem~\ref{eq:cvx_problem} for fixed $\mathbf{X}[i+1]$, for which $\mathbf{\Theta}[i]$ is feasible and the epigraph constraints are tight; and (e) follows from the bound in~\eqref{eq:aggregator_bound}, which holds for every $\mathbf{X}$. Convergence follows since the per-AP budget in Problem~\ref{SLqP_original_problem} limits the total radiated power, so every rate, and hence the SLqP objective, is bounded above, and a non-decreasing sequence that is bounded above converges.
\end{proof}

Theorem~\ref{thm:monotone} does not characterize the limit. Since $f_{K_{q}}$ is not differentiable by Lemma~\ref{lem:slqp}(iv), the usual first-order condition does not apply, and the relevant concept is \emph{directional} stationarity, i.e., a point at which the directional derivative of the objective is non-positive along every feasible direction; as noted in \cite{PercI_10466635}, this is a stronger requirement than the stationarity guarantees of the prior FP and WMMSE literature, which address smooth weighted sum-rate objectives: let $\mathcal{T}$ denote the feasible set of Problem~\ref{eq:theta_problem}, let $g(\mathbf{\Theta}) := f_{K_{q}}(\mathbf{r}(\mathbf{\Theta}))$ denote the SLqP rate objective in the amplitude variables, and define the surrogate $u(\mathbf{\Theta}\,|\,\mathbf{\Theta}') := f_{K_{q}}(\check{\mathbf{r}}(\mathbf{X}^{\mathrm{opt}}(\mathbf{\Theta}'),\mathbf{\Theta}))$, where $\mathbf{X}^{\mathrm{opt}}(\mathbf{\Theta}')$ is the update in~\eqref{eq:X_opt} evaluated at $\mathbf{\Theta}'$. The power update of Algorithm~\ref{algorithm:percentileAlgo} is exactly $\mathbf{\Theta}[i+1] \in \argmax_{\mathbf{\Theta} \in \mathcal{T}} u(\mathbf{\Theta}\,|\,\mathbf{\Theta}[i])$.

\begin{theorem}\label{thm:dstationary}
Algorithm~\ref{algorithm:percentileAlgo} is a cyclic minorization-maximization (MM) algorithm, and every limit point of the sequence $\{\mathbf{\Theta}[i]\}$ it generates is a directional stationary point of the SLqP rate maximization problem in the amplitude variables, Problem~\ref{eq:theta_problem}.
\end{theorem}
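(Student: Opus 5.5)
The plan is to verify the three defining properties of a minorization-maximization scheme for the surrogate $u(\cdot\,|\,\mathbf{\Theta}')$ on $\mathcal{T}$, and then invoke the standard MM convergence argument for non-smooth objectives. The three properties are:
\begin{itemize}
\item[(i)] \emph{Minorization:} $u(\mathbf{\Theta}\,|\,\mathbf{\Theta}') \le g(\mathbf{\Theta})$ for all $\mathbf{\Theta},\mathbf{\Theta}' \in \mathcal{T}$.
\item[(ii)] \emph{Tightness:} $u(\mathbf{\Theta}'\,|\,\mathbf{\Theta}') = g(\mathbf{\Theta}')$.
\item[(iii)] \emph{Directional tangency:} $u'(\mathbf{\Theta}';\mathbf{d}\,|\,\mathbf{\Theta}') = g'(\mathbf{\Theta}';\mathbf{d})$ for every feasible direction $\mathbf{d}$ at $\mathbf{\Theta}'$.
\end{itemize}
Here $u'$ and $g'$ denote one-sided directional derivatives. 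Continuity of $u$ jointly in $(\mathbf{\Theta},\mathbf{\Theta}')$ is also needed.

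\textbf{Properties (i), (ii) and continuity.} Properties (i) and (ii) follow directly from Theorem~\ref{thm:equivalence} with $\phi = f_{K_q}$ (monotone by Lemma~\ref{lem:slqp}(iii)), together with the identity $\bar a_{k,k'}(\mathbf{\Theta}) = a_{k,k'}(\mathbf{P})$ under~\eqref{eq:theta_map}. Continuity holds because $\mathbf{X}^{\rm opt}$ in~\eqref{eq:X_opt} has denominator at least $\sigma^2>0$, the rates are continuous, and $f_{K_q}$ is continuous.

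\textbf{Property (iii), the main obstacle.} Since $f_{K_q}$ is non-smooth, I would reduce to the smooth inner maps via a chain rule for directional derivatives. Let $\check{\mathbf r}(\mathbf{\Theta}) := \check{\mathbf r}(\mathbf{X}^{\rm opt}(\mathbf{\Theta}'),\mathbf{\Theta})$ and $\mathbf r(\mathbf{\Theta})$ be the true rates in the amplitude variables. Both maps are continuously differentiable on a neighbourhood of $\mathcal{T}$ (the true SINR is a smooth ratio of quadratics in $\mathbf{\Theta}$ with positive denominator, which is precisely why we work in $\mathbf{\Theta}$ rather than $\mathbf{P}$, where $\sqrt{p}$ is non-differentiable at zero). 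Also, $f_{K_q}$ is a finite concave piecewise-linear function, hence Lipschitz and directionally differentiable, with
\[
f_{K_q}'(\mathbf y;\mathbf w) = \min_{\mathbf a \in \mathcal{A}^\star(\mathbf y)} \mathbf a^T \mathbf w ,
\]
where $\mathcal{A}^\star(\mathbf y)$ is the active set of selection vectors in~\eqref{eq:SLqP_2}. It follows that
\[
g'(\mathbf{\Theta}';\mathbf d) = f_{K_q}'\big(\mathbf r(\mathbf{\Theta}'); J_{\mathbf r}(\mathbf{\Theta}')\mathbf d\big),
\]
and analogously for $u$. Since by (ii) the two rate vectors coincide at $\mathbf{\Theta}'$, the active sets coincide. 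It then suffices to show the Jacobians agree componentwise: $\nabla_{\mathbf{\Theta}}\check r_k = \nabla_{\mathbf{\Theta}} r_k$ at $\mathbf{\Theta}'$.

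\textbf{Jacobian matching.} For fixed $\mathbf{\Theta}'$, the difference $r_k(\mathbf{\Theta}) - \check r_k(\mathbf{\Theta}) \ge 0$ is differentiable and vanishes at the interior-or-boundary point $\mathbf{\Theta}'$. An argument based on minimization over the unconstrained $\mathbf{\Theta}$ would give a zero gradient, but I prefer a direct computation to avoid boundary issues. Write $\mathrm{SINR}_k = A/B$ with $\sqrt{A} = \sum_b \theta_{k,b}\Vert\mathbf h_{k,b}\Vert_2$ (smooth in $\mathbf{\Theta}$ on the nonnegative orthant and extendable linearly) and $B$ quadratic. Then
\[
\nabla\big(2x\sqrt A - x^2 B\big) = 2x\nabla\sqrt A - x^2\nabla B ,
\]
which at $x = \sqrt A/B$ equals $\nabla(A/B) = 2\sqrt A\,\nabla\sqrt A/B - A\nabla B/B^2$. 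Composing with $\log(1+\cdot)$ at equal arguments preserves the equality of gradients. This is the step that genuinely uses the change of variables: the numerator term is linear in $\mathbf{\Theta}$, so there are no derivative singularities at $\theta_{k,b}=0$.

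\textbf{Limit-point argument.} Take a subsequence $\mathbf{\Theta}[i_j] \to \bar{\mathbf{\Theta}}$. By Theorem~\ref{thm:monotone},
\[
u(\mathbf{\Theta}\,|\,\mathbf{\Theta}[i_j]) \le u(\mathbf{\Theta}[i_j+1]\,|\,\mathbf{\Theta}[i_j]) \le g(\mathbf{\Theta}[i_j+1]) \le g(\mathbf{\Theta}[i_{j+1}])
\]
for all $\mathbf{\Theta}\in\mathcal{T}$. Passing to the limit by continuity gives $u(\mathbf{\Theta}\,|\,\bar{\mathbf{\Theta}}) \le g(\bar{\mathbf{\Theta}}) = u(\bar{\mathbf{\Theta}}\,|\,\bar{\mathbf{\Theta}})$, so $\bar{\mathbf{\Theta}}$ maximizes $u(\cdot\,|\,\bar{\mathbf{\Theta}})$ over the convex set $\mathcal{T}$. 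Hence $u'(\bar{\mathbf{\Theta}};\mathbf d\,|\,\bar{\mathbf{\Theta}}) \le 0$ for every $\mathbf d = \mathbf{\Theta}-\bar{\mathbf{\Theta}}$ with $\mathbf{\Theta}\in\mathcal{T}$. Property (iii) then transfers this to $g'(\bar{\mathbf{\Theta}};\mathbf d) \le 0$, which is directional stationarity of Problem~\ref{eq:theta_problem}. The MM label follows from (i)--(ii) together with the closed-form $\mathbf X$ step realizing the minorant construction.
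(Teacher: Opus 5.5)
Your proposal is correct and follows the paper's proof. You check the same minorization, tightness, continuity and directional-tangency conditions, and you reduce tangency exactly as the paper does: the true and surrogate rate vectors coincide at $\mathbf{\Theta}'$, so $f_{K_q}$ has the same active set in $\min_{\mathbf{a}\in\mathcal{A}^{\star}}\mathbf{a}^{T}\mathbf{J}\mathbf{d}$, and the two maps share the same Jacobian. The only differences are cosmetic: you get the Jacobian match by differentiating the QFT expression directly rather than from the vanishing gradient of the non-negative gap $\rho_k-\check{\rho}_k$ at its minimum, and you write out the limit-point argument that the paper imports from \cite[Theorem~1]{razaviyayn2013unified}.
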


\begin{proof}
{By~\cite[Theorem~1]{razaviyayn2013unified} (see also~\cite{sun2017majorization})}, every limit point of an MM sequence is a directional stationary point provided that the feasible set is closed and convex and the surrogate satisfies, for all $\mathbf{\Theta},\mathbf{\Theta}' \in \mathcal{T}$: (A1)~$u(\mathbf{\Theta}'\,|\,\mathbf{\Theta}') = g(\mathbf{\Theta}')$; (A2)~$u(\mathbf{\Theta}\,|\,\mathbf{\Theta}') \le g(\mathbf{\Theta})$; (A3)~$u'(\mathbf{\Theta}';\mathbf{d}\,|\,\mathbf{\Theta}') = g'(\mathbf{\Theta}';\mathbf{d})$ for every feasible direction $\mathbf{d}$ at $\mathbf{\Theta}'$, where $u'$ and $g'$ denote directional derivatives; and (A4)~$u$ is continuous in $(\mathbf{\Theta},\mathbf{\Theta}')$. The set $\mathcal{T}$ defined by the constraints of Problem~\ref{eq:theta_problem} is non-empty, closed, convex and bounded, so the sequence $\{\mathbf{\Theta}[i]\}$ has at least one limit point. Conditions (A1) and (A2) are the content of Theorem~\ref{thm:equivalence} transported through the bijection~\eqref{eq:theta_map}: by~\eqref{eq:QFT_lower_bound}, $\check{r}_{k}(\mathbf{X}^{\mathrm{opt}}(\mathbf{\Theta}'),\mathbf{\Theta}) \le r_{k}(\mathbf{\Theta})$ for every $\mathbf{\Theta}$, with equality at $\mathbf{\Theta} = \mathbf{\Theta}'$ by the tightness of the bound at $x^{\mathrm{opt}}$, and $f_{K_{q}}$ is non-decreasing. Condition (A4) holds because $\mathbf{X}^{\mathrm{opt}}(\cdot)$ in~\eqref{eq:X_opt} is continuous, $\check{r}_{k}$ in~\eqref{eq:rcheck} is continuous in $(\mathbf{X},\mathbf{\Theta})$ wherever the argument of its logarithm is positive, which it is in a neighbourhood of every point $(\mathbf{\Theta}',\mathbf{\Theta}')$ since there it equals $1+\mathrm{SINR}_{k}(\mathbf{\Theta}') \ge 1$, and $f_{K_{q}}$ is continuous by Lemma~\ref{lem:slqp}(ii).

Condition (A3) is where the non-smoothness enters, and we verify it directly. Fix $\mathbf{\Theta}' \in \mathcal{T}$ and write $\boldsymbol{\rho}(\mathbf{\Theta}) := \mathbf{r}(\mathbf{\Theta})$ and $\check{\boldsymbol{\rho}}(\mathbf{\Theta}) := \check{\mathbf{r}}(\mathbf{X}^{\mathrm{opt}}(\mathbf{\Theta}'),\mathbf{\Theta})$, so that $g = f_{K_{q}} \circ \boldsymbol{\rho}$ and $u(\cdot\,|\,\mathbf{\Theta}') = f_{K_{q}} \circ \check{\boldsymbol{\rho}}$. Both inner maps are continuously differentiable in a neighbourhood of $\mathbf{\Theta}'$: the argument of the logarithm in~\eqref{rate_function_cell_free} is a ratio of polynomials in $\mathbf{\Theta}$ whose denominator is bounded below by $\sigma^{2} > 0$, and the argument of the logarithm in~\eqref{eq:rcheck} is a polynomial in $\mathbf{\Theta}$ that is positive near $\mathbf{\Theta}'$ as noted above. Moreover, $\rho_{k} - \check{\rho}_{k} \ge 0$ on this neighbourhood with equality at $\mathbf{\Theta}'$, so the differentiable function $\rho_{k} - \check{\rho}_{k}$ attains its minimum at $\mathbf{\Theta}'$ and its gradient vanishes there; hence the two inner maps share the same Jacobian $\mathbf{J} \in \mathbb{R}^{K \times N}$ at $\mathbf{\Theta}'$, and the same value $\mathbf{z} := \boldsymbol{\rho}(\mathbf{\Theta}') = \check{\boldsymbol{\rho}}(\mathbf{\Theta}')$. 

Next, let $\mathcal{A}^{\star}(\mathbf{z}) \subseteq \mathcal{A}_{K_{q}}$ denote the set of selection vectors attaining the minimum in~\eqref{eq:SLqP_2} at $\mathbf{z}$. Since $\mathcal{A}_{K_{q}}$ is finite, every ${\bf a} \notin \mathcal{A}^{\star}(\mathbf{z})$ satisfies ${\bf a}^{T}\mathbf{z} \ge f_{K_{q}}(\mathbf{z}) + \delta$ for some $\delta > 0$, and therefore $f_{K_{q}}(\mathbf{z} + \mathbf{w}) = f_{K_{q}}(\mathbf{z}) + \min_{{\bf a} \in \mathcal{A}^{\star}(\mathbf{z})} {\bf a}^{T}\mathbf{w}$ for all $\mathbf{w}$ of sufficiently small norm. Let $\mathbf{d}$ be a feasible direction at $\mathbf{\Theta}'$ and $\alpha \downarrow 0$. Using $\boldsymbol{\rho}(\mathbf{\Theta}'+\alpha\mathbf{d}) = \mathbf{z} + \alpha\mathbf{J}\mathbf{d} + o(\alpha)$, this expansion and the Lipschitz continuity of $f_{K_{q}}$, we obtain $g'(\mathbf{\Theta}';\mathbf{d}) = \lim_{\alpha \downarrow 0} [g(\mathbf{\Theta}'+\alpha\mathbf{d}) - g(\mathbf{\Theta}')]/\alpha = \min_{{\bf a} \in \mathcal{A}^{\star}(\mathbf{z})} {\bf a}^{T}\mathbf{J}\mathbf{d}$. The identical computation for $u(\cdot\,|\,\mathbf{\Theta}')$ involves the same $\mathbf{z}$, the same $\mathbf{J}$ and hence the same active set $\mathcal{A}^{\star}(\mathbf{z})$, and yields the same value. This establishes (A3) and completes the proof.
\end{proof}

\begin{remark}\label{rem:amplitude_param}
The amplitude parameterization in~\eqref{eq:theta_map} is the one in which the rates are smooth functions of the design variables, since the powers enter~\eqref{rate_function_cell_free} only through their square roots, and it is the parameterization in which the beamforming problems of \cite{PercII_10458868} are posed; Theorem~\ref{thm:dstationary} is accordingly stated in terms of $\mathbf{\Theta}$. On the open positive orthant the mapping~\eqref{eq:theta_map} is a diffeomorphism, so a limit point with strictly positive powers is also a directional stationary point of Problem~\ref{SLqP_original_problem} in the original power variables; at a point where some powers vanish, the Jacobian of the mapping is singular and the two notions need not coincide. Since Problem~\ref{SLqP_original_problem} is NP-hard, a directional stationary point is moreover the best that can be achieved in polynomial time, and Theorem~\ref{thm:dstationary} guarantees that the algorithm cannot stall at an arbitrary point.
\end{remark}

\subsection{Percentile Optimization for the Benchmark Architectures}\label{sec:benchmark_architectures}

In Section~\ref{sec:results} we compare UC-mMIMO against two benchmark architectures, massive MIMO and small cells. So that the comparison isolates the effect of the architecture, the powers in all three cases are optimized for the same SLqP utility rather than for a proxy. Here we show that doing so requires no new machinery: in both benchmarks the algorithm of Section~\ref{sec:algorithm} reduces to the QFT algorithm of \cite{PercI_10466635}, which also makes precise which part of our development is specific to the cell-free setting.

The essential distinction is that both benchmarks are single-node architectures: in massive MIMO each UE is served by the macro BS of the cell in which it resides, and in the small-cell architecture by the single AP whose Voronoi region contains it, so that $|\mathcal{C}_{k}| = 1$ for every UE. Denoting by $c_{k}$ the node serving UE $k$, the cluster sums in~\eqref{rate_function_cell_free} collapse to single terms, and the rate of UE $k$ becomes $r_{k}^{\rm (sn)}(\mathbf{P})=\log(1+A_{k}(\mathbf{P})/B_{k}(\mathbf{P}))$, subject to the per-node budget $\sum_{k: c_{k} = c} p_{k,c} \le \bar{P}_{\max}$ at every node $c$, where $\bar{P}_{\max}$ is specified for each benchmark in Section~\ref{sec:results}, and where
\begin{align}
A_{k}\left(\mathbf{P}\right) &= p_{k,c_{k}}\big|\mathbf{v}_{k,c_{k}}^{H}\mathbf{h}_{k,c_{k}}\big|^{2}
\label{eq:A_single}
\\
B_{k}\left(\mathbf{P}\right) &= \sum_{k'\in\mathcal{K}_{-k}} p_{k',c_{k'}}\big|\mathbf{v}_{k',c_{k'}}^{H}\mathbf{h}_{k,c_{k'}}\big|^{2}+\sigma^{2}
\label{eq:B_single}
\end{align}
Observe that $A_{k}(\mathbf{P})$ is linear in the powers and $B_{k}(\mathbf{P})$ is affine. Consequently $2x_{k}\sqrt{A_{k}(\mathbf{P})}$ is concave, being the square root of a linear function, while $-x_{k}^{2}B_{k}(\mathbf{P})$ is affine, so the argument of the logarithm in the auxiliary rate is concave in $\mathbf{P}$ directly, and by the composition rule together with Lemma~\ref{lem:slqp}(i) and (iii) the auxiliary objective is concave in the powers when the auxiliary variables are fixed; the per-node budget likewise remains linear. In other words, the block-concavity that Section~\ref{sec:reformulation} had to recover through the change of variables in~\eqref{eq:theta_map} is already present, and that step may simply be omitted. What remains is exactly the cyclic maximization of \cite{PercI_10466635}: the auxiliary variables are updated in closed form as $x_{k}=\sqrt{A_{k}(\mathbf{P})}/B_{k}(\mathbf{P})$, and the powers are obtained by solving a concave program for fixed $\mathbf{X}$. Algorithm~\ref{algorithm:percentileAlgo} thus applies verbatim with $\mathbf{\Theta}$ replaced by $\mathbf{P}$ and the final squaring step removed, and Theorems~\ref{thm:monotone} and~\ref{thm:dstationary} hold, with steps~\eqref{eq:conv_b} and~\eqref{eq:conv_d} of the convergence chain becoming vacuous.

\subsection{Computational Complexity and Fronthaul Load}\label{sec:complexity}

Two costs determine whether Algorithm~\ref{algorithm:percentileAlgo} is practical: the arithmetic performed at the CU and the information that must traverse the fronthaul. Let $\bar{C}=\max_{k}|\mathcal{C}_{k}|$ denote the largest cluster size and $N=\sum_{k\in\mathcal{K}}|\mathcal{C}_{k}|=\sum_{b\in\mathcal{B}}|\mathcal{K}_{b}|\le K\bar{C}$ the total number of power variables, which is the natural measure of problem size in a user-centric deployment.

\begin{table*}[t]
	\centering
	\caption{Simulation parameters.}
	\footnotesize
	\begin{tabular}{|p{0.22\linewidth}|p{0.23\linewidth}||p{0.29\linewidth}|p{0.14\linewidth}|}
		\hline
		\multicolumn{1}{|l|}{ \textit{\textbf{Parameter}}} & \multicolumn{1}{l||}{\textit{\textbf{Value}}}&\multicolumn{1}{l|}{ \textit{\textbf{Parameter}}} & \multicolumn{1}{l|}{\textit{\textbf{Value}}}\\
		\hline
		Number of APs: $B$ & \{MDD, UDD\}$=\{75, 150\}$ APs
		&
        Number of virtual cells
		  &
        $5\times 5 = 25$ regions
		\\
        Number of UEs: $K$ & $100$ UEs &
        Side length of square virtual cell $r_{\rm vcell}$
        &
        $200$~m
        \\
		Number of antennas per AP: $M$ & $8$ antennas
		&
        Path loss threshold $\rho$;$\ \ \ $ distance threshold $\bar{d}$
        &
        ${\rm PL}(\bar{d})$;$\ \ \ $ $r_{\rm vcell}/2$
        \\
		Serving set $\mathcal{C}_{k}$ of UE $k$
		&
        Formula~\eqref{eq:C_k}
        &
		Path loss terms: $\alpha_{\rm pl}$, $d_0$
		& $3.8$, $1.1~{\rm m}$
		\\
        Percentile $q$ & 10
        &
        Noise spectral density
		& $-150~{\rm dBm/ Hz}$
        \\
        Power budget of AP $P_{\rm max}$ & $36~{\rm dBm}$
		&
		Std. deviation of shadowing: $\sigma_\mathrm{sh|dB}$& $4~{\rm dB}$
		\\
		Bandwidth & $20~{\rm MHz}$
        &
		Noise figure
		& $9~{\rm dB}$
        \\
		Carrier frequency: $f_{\rm c}$ & $1.8~{\rm GHz}$
		&
        AP-UE height difference $d_{\rm h}$ &
        13.5 m
		\\
		\hline
	\end{tabular}
	\label{table:sim_parameters}
\end{table*}

\subsubsection{Computational complexity}
The cross terms $\mathbf{h}_{k,b}^{H}\mathbf{v}_{k',b}$ depend only on the channels and the MRT directions and not on the powers, so they are computed once per coherence block, at a cost of $\mathcal{O}(KNM)$ inner products of length $M$, and reused in every iteration. Given these terms, the auxiliary update~\eqref{eq:X_opt} costs $\mathcal{O}(KN)$ per iteration, with no matrix inversion or eigendecomposition at any stage, a direct consequence of fixing the beamforming directions to MRT and optimizing only the powers. The power update dominates: Problem~\ref{eq:cvx_problem} is a concave program in $n=N+K$ variables, subject to $N$ non-negativity constraints, $B$ second-order cone constraints from the per-AP budgets and $K$ exponential-cone constraints from~\eqref{eq:exp_constraint}, which a generic interior-point method solves in $\mathcal{O}(n^{3.5})$ arithmetic operations, up to a logarithmic factor in the solver accuracy~\cite{boyd2004convex}. The overall cost of Algorithm~\ref{algorithm:percentileAlgo} is therefore $\mathcal{O}(KNM)+\mathcal{O}(I\,(N+K)^{3.5})$, where $I$ denotes the number of outer iterations. Since the cluster sizes in a user-centric deployment are bounded by the association rule in~\eqref{eq:C_k} and do not grow with the network, $N$ scales linearly rather than quadratically in the number of UEs, and this cost is polynomial in the network size notwithstanding the NP-hardness of the underlying problem. We note that this is a generic interior-point bound: Problem~\ref{eq:cvx_problem} is sparse, since the constraint of each UE involves only its own cluster and the clusters of its interferers, and a solver exploiting this sparsity may be expected to perform considerably better. As shown in Section~\ref{sec:results}, $I$ is small: the algorithm attains more than 95\% of its final SLqP rate within five iterations in all the configurations we consider.

\subsubsection{Fronthaul load}
The CU implements the optimization and computes the powers used by the APs, which keeps the AP design simple and inexpensive and corresponds to a higher ``Option Number'' for the functional split of the protocol stack, such as Option~7.2 or Option~8~\cite[Chapter~1]{ammar2023analysis}. The associated signalling can be counted precisely. Since the precoders are given by MRT and are therefore computable locally, AP $b$ can form the scalar $\mathbf{h}_{k,b}^{H}\mathbf{v}_{k',b}=\mathbf{h}_{k,b}^{H}\mathbf{h}_{k',b}/\Vert\mathbf{h}_{k',b}\Vert_{2}$ from its own measurements, for each UE $k'\in\mathcal{K}_{b}$ that it serves and each UE $k$ it can hear, and forward these $K|\mathcal{K}_{b}|$ complex scalars in place of the $M$-dimensional vectors from which they were computed; the channel gains $\Vert\mathbf{h}_{k,b}\Vert_{2}$ required by~\eqref{eq:rcheck} are recovered as the $k'=k$ entries of the same set. The uplink fronthaul load in complex scalars per coherence block is therefore $\sum_{b\in\mathcal{B}}K|\mathcal{K}_{b}|=KN$, as compared with $\mathcal{O}(KBM)$ had the raw channels been forwarded, a reduction by a factor on the order of $M$ for typical cluster sizes; in the downlink direction, the CU returns to each AP only the powers of the UEs it serves, for a total of $N$ real scalars. Both quantities are independent of the iteration count $I$: every iteration executes entirely at the CU using data already in its possession, so the fronthaul is exercised once per coherence block rather than once per iteration, which decouples the convergence speed of the algorithm from the fronthaul capacity of the deployment, typically the scarcer resource in a dense cell-free network. A decentralized implementation would be attractive for very large deployments, but is fundamentally more difficult here than for sum-rate or max-min objectives, since an AP cannot establish from local information alone whether the UEs it serves belong to the active set; decentralized percentile optimization remains an open problem.

\begin{figure*}[t]
	\centering
	\begin{subfigure}[t]{0.48\textwidth}
	\centering
    \includegraphics[width=1\textwidth]{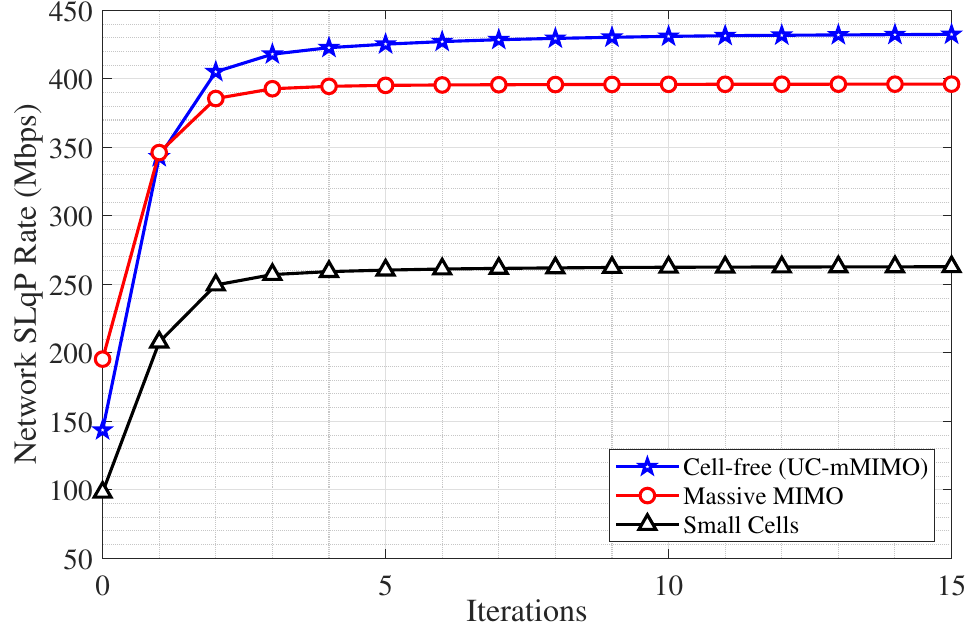}
	\caption{MDD, $B = 75$ APs.}
	\label{sfig:conv_APs75}
\end{subfigure}
		\hspace{1.5em}%
	\begin{subfigure}[t]{0.48\textwidth}
	\centering
    \includegraphics[width=1\textwidth]{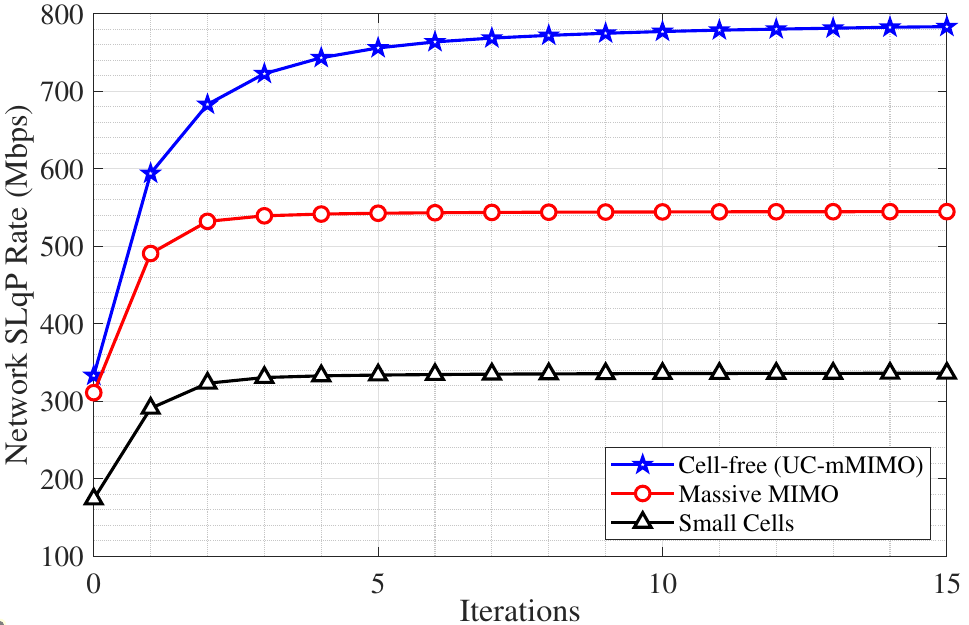}
	\caption{UDD, $B = 150$ APs.}
	\label{sfig:conv_APs150}
    \end{subfigure}
	\caption{Convergence of the network SLqP rate under Algorithm~\ref{algorithm:percentileAlgo} for the three architectures, starting from equal power allocation.}
	\label{fig:conv}
\end{figure*}

\begin{figure*}[t]
	\centering
	\begin{subfigure}[t]{0.48\textwidth}
	\centering
    \includegraphics[width=1\textwidth]{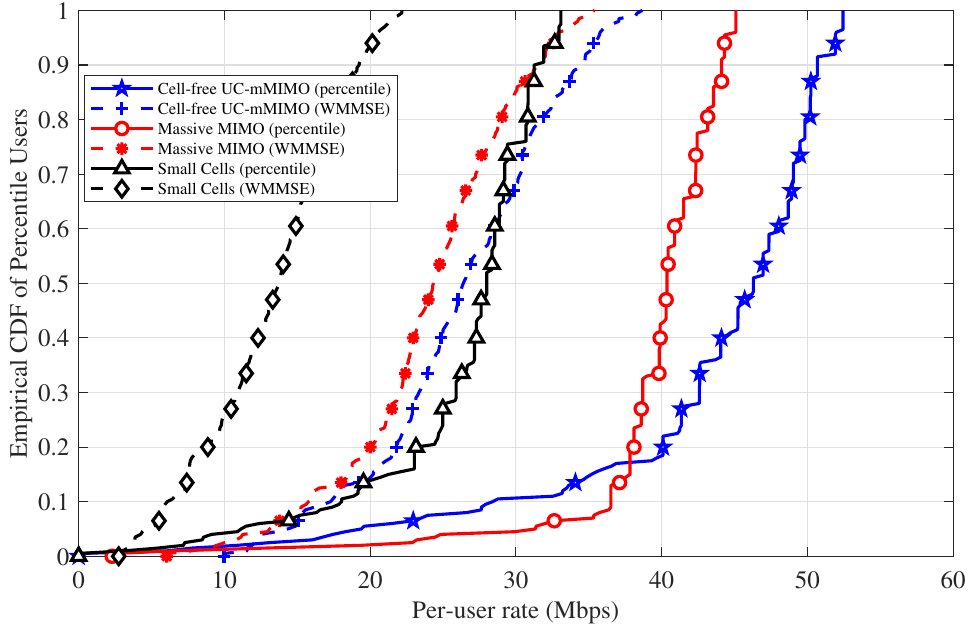}
	\caption{MDD, $B = 75$ APs.}
	\label{sfig:CDF_75}
\end{subfigure}
		\hspace{1.5em}%
	\begin{subfigure}[t]{0.48\textwidth}
	\centering
    \includegraphics[width=1\textwidth]{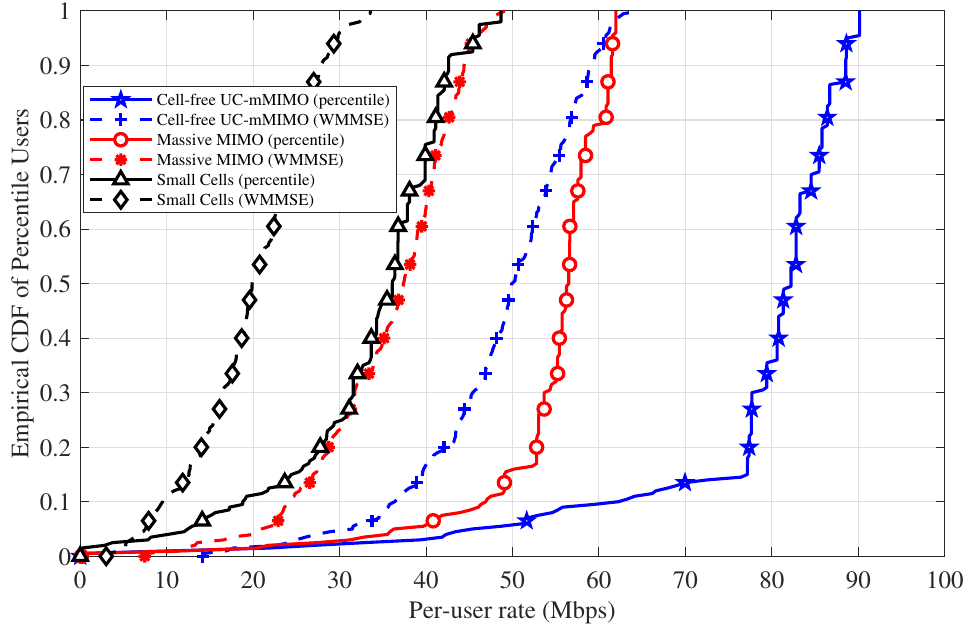}
	\caption{UDD, $B = 150$ APs.}
	\label{sfig:CDF_150}
    \end{subfigure}
	\caption{Empirical CDF of the rates of the lowest-$10^{\rm th}$-percentile UEs for the three architectures under percentile optimization and WMMSE.}
	\label{fig:CDF_percentile}
\end{figure*}

\section{Numerical Results}\label{sec:results}

\subsection{Simulation Setup}

We consider a network area of $1\times1~\text{km}^2$ containing $K = 100$~UEs, divided into $5\times5$ square virtual cells of side $r_{\rm vcell} = 200$~m, i.e., $C = 25$ regions. In each virtual cell we place $4$ uniformly distributed UEs and $B_{c}$ uniformly distributed APs, each equipped with $M = 8$ antennas. We consider two levels of densification: a moderately dense deployment (MDD) with $B_{c} = 3$, i.e., $B = 75$~APs, and an ultra-dense deployment (UDD) with $B_{c} = 6$, i.e., $B = 150$~APs. Wraparound is applied to the locations of the APs and UEs so as to eliminate border effects and emulate a network of infinite extent. 

We set $q = 10$, so that the SLqP utility targets the $K_{q} = 10$ UEs that would be identified as cell-edge UEs in a conventional network, and form the serving clusters according to~\eqref{eq:C_k} with the distance threshold $\bar{d} = r_{\rm vcell}/2$. The LSF coefficient between UE $k$ and AP $b$ comprises path loss and shadowing, $\beta_{k,b} = {\rm PL}(\bar{d}_{k,b})\, \psi_{k,b}$, where the path loss follows the COST231 Walfisch-Ikegami model~\cite{Walfisch14401},
${\rm PL}(\bar{d}_{k,b}) =
\big(\sqrt{(\bar{d}_{k,b})^2 + d_{\rm h}^2}/d_0\big)^{-\alpha_{\rm pl} }$, with $\bar{d}_{k,b}$ the horizontal distance between AP $b$ and UE $k$, $d_{\rm h}$ the fixed height difference between the two, and $\alpha_{\rm pl}$ and $d_0$ the path loss exponent and reference distance, and the shadowing is log-normal, $\psi_{k,b} = 10^{x_\psi/10}$ with $x_\psi\sim \mathcal{N}(0,\sigma_\mathrm{sh|dB}^2)$. Table~\ref{table:sim_parameters} summarizes the parameters; any departure from them is stated in the corresponding figure.

The two benchmark architectures of Section~\ref{sec:benchmark_architectures} are generated from the same AP and UE locations. For massive MIMO, the virtual cells serve as the cells of a cellular layout, and the antennas of the $B_{c}$ APs in each cell are co-located at a macro BS at the cell center, which is therefore equipped with $\bar{M} = B_{c} M \in \{24, 48\}$ antennas in the MDD and the UDD, respectively; massive MIMO is thus the co-located counterpart of UC-mMIMO~\cite{6810508}. To keep the comparison fair, each macro BS is assigned the sum of the power budgets of the APs it replaces, $\bar{P}_{\max} = B_{c} P_{\max}$, which is reasonable since APs consume far less transmit power than macro BSs. For small cells, each AP serves the UEs in its Voronoi region with its own budget $\bar{P}_{\max} = P_{\max}$, and no cooperation takes place.

\begin{figure*}[t]
	\centering
	\begin{subfigure}[t]{0.48\textwidth}
	\centering
    \includegraphics[width=1\textwidth]{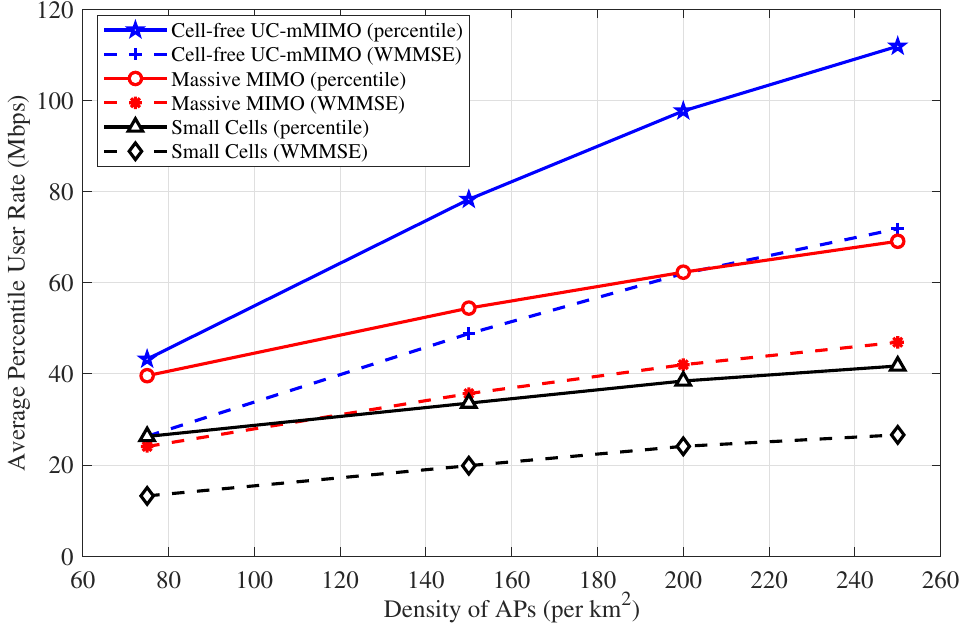}
	\caption{versus the AP density.}
	\label{sfig:avg_vs_density}
\end{subfigure}
		\hspace{1.5em}%
	\begin{subfigure}[t]{0.48\textwidth}
	\centering
    \includegraphics[width=1\textwidth]{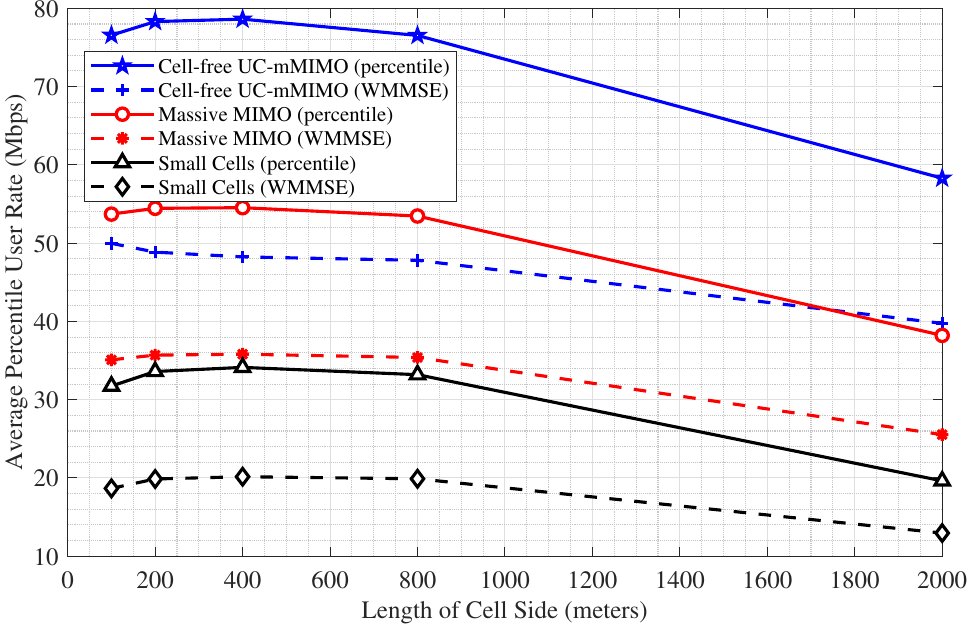}
	\caption{versus the side length of the virtual cell, with the number of APs and UEs per virtual cell held fixed (UDD).}
	\label{sfig:avg_vs_cellsize}
    \end{subfigure}
	\caption{Average rate of the lowest-$10^{\rm th}$-percentile UEs.}
	\label{fig:avg_percentile}
\end{figure*}

Two comparisons are of interest, and we organize the results accordingly. The first is between resource allocation schemes: within each architecture, we compare the proposed SLqP power control against the WMMSE algorithm~\cite{5756489}, which is the standard benchmark for weighted sum-rate maximization and performs comparably to the FP approach of~\cite{8314727}. The second is between architectures: UC-mMIMO, massive MIMO and small cells are compared under the same SLqP utility, which isolates the effect of the architecture from that of the resource allocation. All results are averaged over independent network realizations.

\subsection{Convergence}

Fig.~\ref{fig:conv} plots the network SLqP rate, i.e., the sum of the $K_{q} = 10$ lowest UE rates, against the iteration index for the MDD and the UDD. In every case the objective increases monotonically from its equal-power starting value and settles, in agreement with Theorem~\ref{thm:monotone}, and this behaviour is observed for all three architectures despite the different SINR regimes, AP distributions and degrees of cooperation that they entail. Convergence is also fast: more than 95\% of the final SLqP rate is attained within five iterations in all cases, and the two single-node benchmarks have essentially converged after three to five iterations. UC-mMIMO takes slightly longer, roughly ten iterations to come within 1\% of its final value in the UDD, which is to be expected since the coherent combination across clusters couples many more power variables.

Fig.~\ref{fig:conv} also carries a message that anticipates the results below. At iteration zero, i.e., under equal power allocation, UC-mMIMO is \emph{not} the best architecture for the lowest-percentile UEs: in the MDD it starts below massive MIMO (145 versus 195~Mbps), and in the UDD the two are comparable. It is the optimization that unlocks the cell-free advantage, with the SLqP rate of UC-mMIMO growing by a factor of about three in the MDD and 2.4 in the UDD, against factors of about two and 1.75 for massive MIMO. The architecture alone does not deliver lower-percentile performance; resource allocation is what converts the architecture's potential into rates for these percentile UEs.

\begin{figure*}[t]
	\centering
	\begin{subfigure}[t]{\textwidth}
	\centering
    \includegraphics[width=1\textwidth]{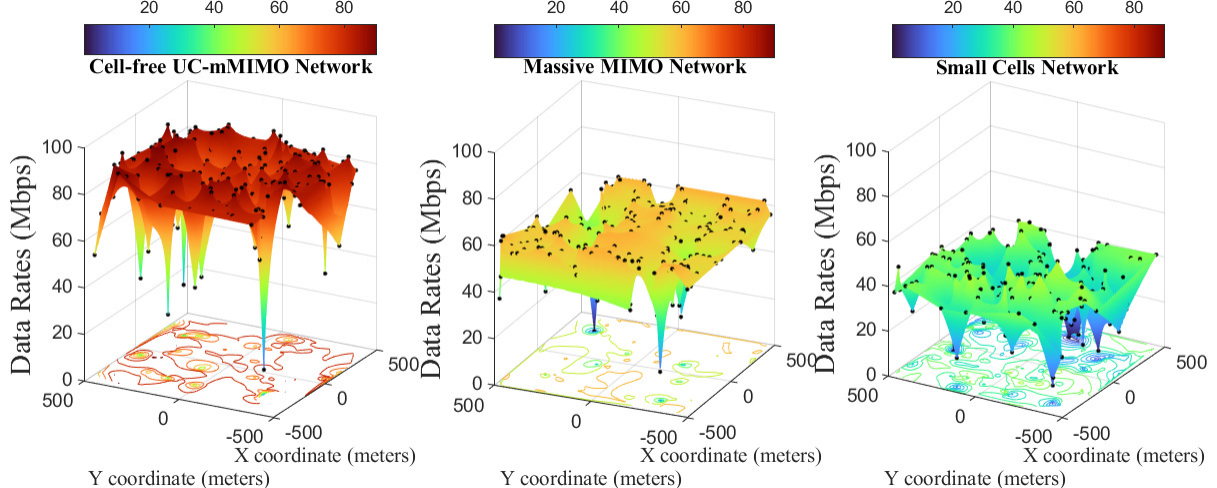}
	\caption{Rate surface interpolated between the percentile UEs; the interpolation serves visualization only.}
	\label{sfig:spatial_surface}
\end{subfigure}
		\\[0pt] 
	\begin{subfigure}[t]{\textwidth}
	\centering
    \includegraphics[width=1\textwidth]{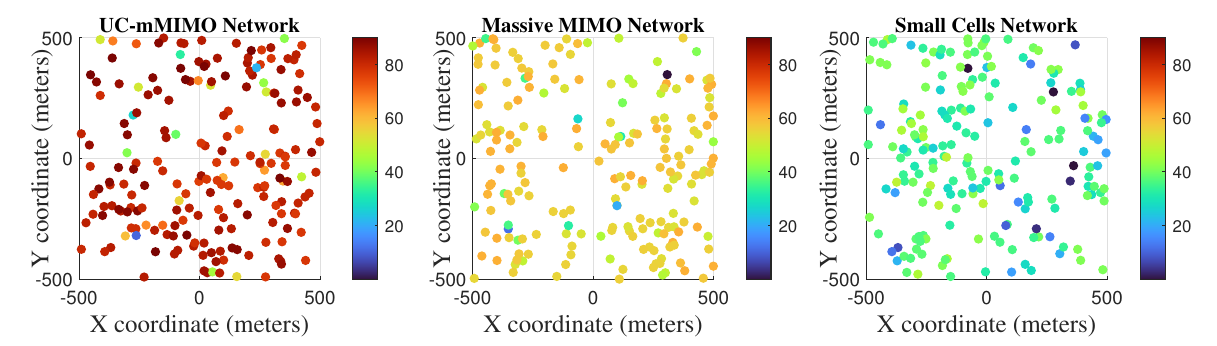}
	\caption{Locations of the percentile UEs, coloured by their achieved rate in Mbps; no interpolation.}
	\label{sfig:spatial_points}
    \end{subfigure}
	\caption{Spatial distribution of the rates of the lowest-$10^{\rm th}$-percentile UEs under percentile optimization in the UDD, pooled over network realizations, for the three architectures. The colour scale is common to all panels.}
	\label{fig:spatialDataRates}
\end{figure*}

\subsection{Value of Percentile Optimization in Cell-Free Networks}

We now address the question raised in the introduction: does explicitly targeting the lowest-percentile UEs still matter in an architecture without cell edges? Fig.~\ref{fig:CDF_percentile} plots the empirical CDF of the rates of the lowest-$10^{\rm th}$-percentile UEs, pooled over network realizations, for the MDD and the UDD, and Fig.~\ref{fig:avg_percentile}(\subref{sfig:avg_vs_density}) plots the average rate of these UEs as the AP density is increased from 75 to 250~APs/km$^2$. In each figure, solid curves correspond to the proposed percentile optimization and dashed curves to WMMSE; the answer to the question is given by the gap between the solid and dashed curves of the same colour.

The gap is large. For UC-mMIMO, the median rate of the percentile UEs rises from about 25~Mbps under WMMSE to about 45~Mbps under percentile optimization in the MDD, and from about 50 to about 81~Mbps in the UDD. In terms of the average rate of the percentile UEs in Fig.~\ref{fig:avg_percentile}(\subref{sfig:avg_vs_density}), percentile optimization improves on WMMSE by 72\% at 75~APs/km$^2$, 59\% at 150, 58\% at 200 and 56\% at 250~APs/km$^2$. The improvement is of the same order in the benchmark architectures, 47--67\% for massive MIMO and 56--100\% for small cells, so the value of percentile optimization is not an artifact of any one architecture.

The shape of the CDFs reveals the mechanism. Under WMMSE, the rates of the low-percentile UEs are spread over a wide range, since a sum-rate-oriented utility has no reason to equalize them, whereas under percentile optimization the CDF is nearly vertical over most of its range, most visibly for UC-mMIMO in the UDD, where more than 80\% of the percentile UEs are confined to the interval from roughly 77 to 90~Mbps: the SLqP objective transfers power from UEs above the $K_{q}^{\rm th}$ order statistic to those below it until the lowest rates are equalized, which is precisely the behaviour described in Remark~\ref{rem:refinement}. The lowest tail of each CDF corresponds to the few UEs with very poor channels toward every AP in their cluster, and it is on these UEs that the max-min utility would have spent the entire network budget.

Fig.~\ref{fig:avg_percentile}(\subref{sfig:avg_vs_density}) exhibits a further comparison that we regard as the clearest evidence of the value of percentile optimization. The dashed blue curve of UC-mMIMO under WMMSE lies well below the solid red curve of massive MIMO under percentile optimization at 75 and 150~APs/km$^2$, and essentially coincides with it at 200 and 250~APs/km$^2$. That is, a conventional co-located massive MIMO network with percentile-optimized power control delivers about as much to its lowest-percentile UEs as a cell-free network with WMMSE, and at moderate densities considerably more: the choice of resource allocation is worth as much as the choice of architecture, and only their combination, the solid blue curve, realizes the full gain.

\subsection{Comparison of the Network Architectures}

We next compare the three architectures under the same SLqP utility, i.e., the solid curves of Figs.~\ref{fig:conv}--\ref{fig:avg_percentile}. Since the three use the same AP and UE locations, the same total number of antennas and the same total transmit power, the differences are attributable to the architecture alone.

UC-mMIMO is the superior architecture at every operating point considered. At convergence in Fig.~\ref{fig:conv}, its network SLqP rate exceeds that of massive MIMO by 9\% in the MDD (432 versus 395~Mbps) and by 44\% in the UDD (782 versus 543~Mbps), and that of small cells by 65\% and 131\%, respectively. The median rate of the low-percentile UEs in Fig.~\ref{fig:CDF_percentile} tells an identical story: 45, 40 and 28~Mbps for UC-mMIMO, massive MIMO and small cells respectively in the MDD, and 81, 55 and 36~Mbps in the UDD. We note that in the MDD, massive MIMO is marginally ahead of UC-mMIMO in the lowest 15\% of the low-percentile UEs, i.e., for the few UEs in each realization that are poorly served by every AP within their cluster distance, and that this crossover disappears in the UDD, where the UC-mMIMO curve lies to the right of the massive MIMO curve throughout.

The decisive observation is how the architectures respond to densification, shown in Fig.~\ref{fig:avg_percentile}(\subref{sfig:avg_vs_density}). As the density increases from 75 to 250~APs/km$^2$, the average rate of the percentile UEs grows by a factor of 2.6 for UC-mMIMO, 1.7 for massive MIMO and 1.6 for small cells. The advantage of UC-mMIMO over massive MIMO consequently widens from 8\% at 75~APs/km$^2$ to 44\%, 58\% and 62\% at 150, 200 and 250~APs/km$^2$, and over small cells from 1.7$\times$ to 2.7$\times$. Small cells gain the least, since densification brings interferers as close as serving APs and the single-AP association cannot exploit the additional infrastructure; massive MIMO gains more, through the larger co-located array; UC-mMIMO gains the most, since every additional AP enlarges the clusters of the UEs around it and contributes coherently to their signals rather than only to interference. UC-mMIMO is therefore the architecture that converts densification into lower-percentile throughput most effectively, which is the regime toward which future networks are moving~\cite{10522673}.

Fig.~\ref{fig:avg_percentile}(\subref{sfig:avg_vs_cellsize}) varies the side length of the virtual cell from 100~m to 2000~m with the number of APs and UEs per virtual cell held fixed, so that all inter-node distances scale with the cell side. For cell sides up to 800~m the rates are essentially flat, since the network is interference limited and the SINRs are invariant to a uniform scaling of the geometry; beyond this point the network becomes noise limited and the rates fall. The ordering of the architectures is preserved throughout, with UC-mMIMO ahead of massive MIMO by 42--54\% and of small cells by 2.3--3$\times$, and the advantage is largest in the noise-limited regime at 2000~m, where proximity to several serving APs matters most.

The CDFs and averages above summarize the rates of the percentile UEs but say nothing about \emph{where} these UEs are, which is the question an operator planning coverage must answer. Fig.~\ref{fig:spatialDataRates} therefore plots, for the UDD under percentile optimization, the locations of the lowest-$10^{\rm th}$-percentile UEs pooled over the network realizations, coloured by their achieved rate on a scale common to the three architectures; Fig.~\ref{fig:spatialDataRates}(\subref{sfig:spatial_surface}) interpolates a rate surface between these UEs to make the spatial structure visible, and Fig.~\ref{fig:spatialDataRates}(\subref{sfig:spatial_points}) shows the individual UEs without interpolation. Three observations follow. First, in every architecture the percentile UEs are spread across the entire area rather than confined to particular regions. This is expected for UC-mMIMO, which has no cell boundaries by construction, but it also holds for the two cellular benchmarks once their powers are optimized for the same percentile objective: the UEs that remain in the lowest percentile after optimization are those with poor channels to every node within reach, whose positions are dictated by shadowing and local geometry rather than by the cell layout. Second, the level and the uniformity of the colours reproduce, in space, the ordering established above. For UC-mMIMO, the surface is a plateau at the top of the scale, with almost every percentile UE at 80--90~Mbps and only isolated dips; for massive MIMO it is a lower plateau at 50--65~Mbps; for small cells it is lower still, at 25--45~Mbps, and markedly rougher, with many UEs below 25~Mbps and several close to zero. The dips in the surfaces are the lowest tail of the CDFs in Fig.~\ref{fig:CDF_percentile}, and they are isolated points rather than contiguous regions in all three cases: percentile optimization leaves no coverage holes of spatial extent, only individual UEs that cannot be rescued by any power allocation.

Third, the figure can be read directly as an outage map. A service with a rate requirement $R_{\rm th}$ is in outage at every UE whose colour lies below $R_{\rm th}$ on the scale, and the comparison across architectures is stark. For a requirement of 40~Mbps, only a handful of the percentile UEs in UC-mMIMO and massive MIMO would be in outage, against roughly half of those in small cells; for a requirement of 70~Mbps, roughly one in ten of the percentile UEs in UC-mMIMO would be in outage, against all of them in massive MIMO and small cells. Since the percentile UEs are, by definition, the weakest tenth of the network, these fractions bound the outage of the network as a whole: with percentile optimization, UC-mMIMO can support a 70~Mbps service with an outage of about 1\% of the UEs in the UDD, a level that neither benchmark architecture can approach, which is the operational meaning of the gains reported in Fig.~\ref{fig:avg_percentile}.

\begin{figure}
	\centering
	\begin{subfigure}[t]{0.49\linewidth}
		\centering
		\includegraphics[width=1\textwidth]{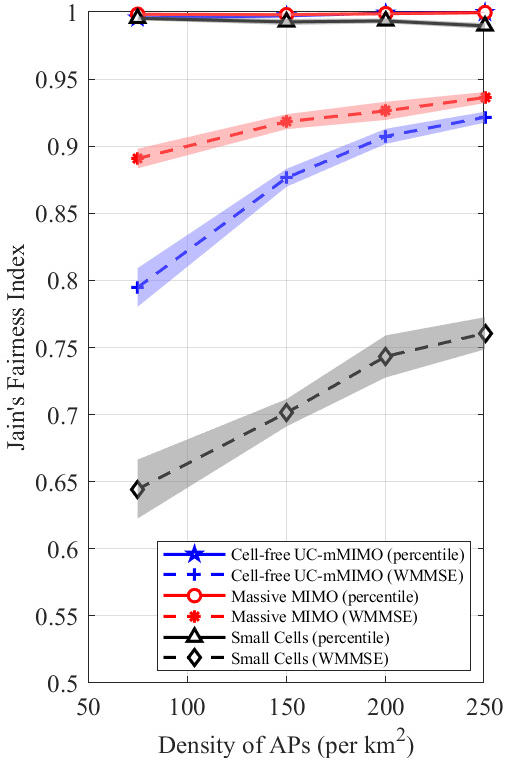}
		\caption{versus AP density}
		\label{sfig:fairness_vs_APDensity_withVariance_mini}
	\end{subfigure}
	\begin{subfigure}[t]{0.49\linewidth}
		\centering
		\includegraphics[width=1\textwidth]{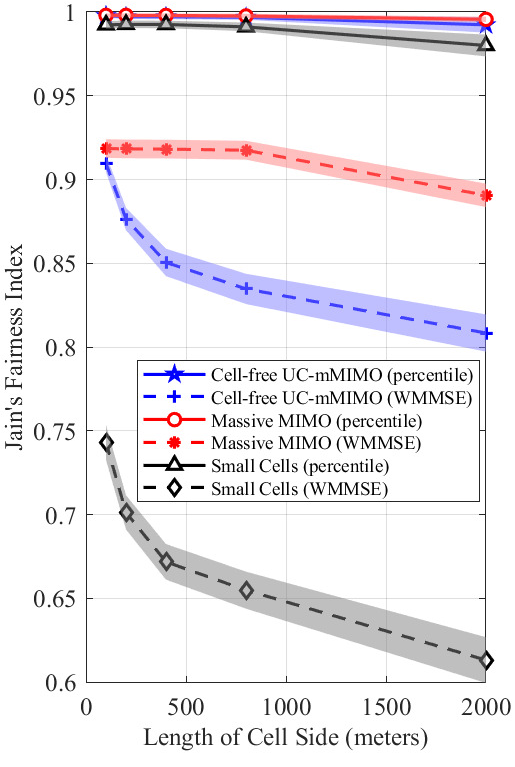}
		\caption{versus virtual cell side}
		\label{sfig:fairness_vs_CellSize_withVariance_mini}
	\end{subfigure}
	\caption{Jain's fairness index for the three architectures under percentile optimization and WMMSE. Markers show network realizations averages; shaded regions show the range.}
    \label{fig:fairness_withVariance}
\end{figure}

\subsection{User Fairness}

To measure fairness among the UEs we use Jain's fairness index~\cite{jain1999throughput}, $J(\mathcal{K}) = ( \sum_{k = 1}^{K} r_k )^2 / (K \sum_{k = 1}^{K} r_k^2)$, which is bounded between $1/K$, attained when a single UE receives all the throughput, and $1$, attained when all UEs receive the same rate, and admits the interpretation that $J(\mathcal{K}) = x\%$ means that $x\%$ of the UEs are treated fairly.

Fig.~\ref{fig:fairness_withVariance} plots Jain's index against the AP density and the virtual cell side, showing both the average over network realizations and the range. Percentile optimization yields an index between 99\% and 99.8\% for all three architectures at every AP density, and between 97.3\% and 99.7\% at every cell size. WMMSE is far less fair, with an index ranging from 62.2\% to 89.8\% across the architectures at 75~APs/km$^2$ and from 72.7\% to 93.3\% at 200~APs/km$^2$, and from 73.2\% to 92.4\% at a cell side of 100~m and from 59.9\% to 89.7\% at 2000~m. 

Two features of these results deserve comment. First, under WMMSE the fairness of UC-mMIMO is \emph{below} that of massive MIMO at every density and cell size, and the fairness of every architecture degrades as the cell grows; the cell-free architecture does not, by itself, deliver fairness. Second, under percentile optimization the three architectures are indistinguishable in fairness, which, read together with Fig.~\ref{fig:avg_percentile}, means that UC-mMIMO attains the same near-perfect fairness as its competitors at substantially higher rates, using the same network resources. For a network operator, the combination therefore provides the fairest allocation of resources across the whole coverage area at the highest rates at which that fairness can be delivered, reducing the likelihood of outage for any UE while other UEs are allocated rates beyond their requirements.

\section{Conclusion}\label{sec:conclusion}

In this paper, we formulated and solved the problem of maximizing the sum-least-$q^{\rm th}$-percentile rate via downlink power control in user-centric cell-free massive MIMO networks, thereby combining a resource allocation strategy that explicitly targets the lowest-percentile UEs with an architecture designed around fairness across UEs. 

We showed that the cyclic maximization approach of \cite{PercI_10466635} does not apply directly, because coherent joint transmission introduces cross products of the powers that destroy block-concavity, and that a change of variables to the transmit amplitudes restores it. The resulting algorithm updates the auxiliary variables in closed form and the powers through a concave program, is non-decreasing in the SLqP rate, and converges to a directional stationary point, as we proved in a self-contained manner; in single-node architectures it reduces to the algorithm of \cite{PercI_10466635}.

Our simulation results support two conclusions. First, percentile optimization has real value even in a network with no cell edges: it improves the average rate of the lowest-$10^{\rm th}$-percentile UEs by 56--72\% over the WMMSE benchmark and raises Jain's fairness index above 0.99, and a massive MIMO network under percentile optimization delivers about as much to these UEs as a cell-free network under WMMSE. Second, among the three architectures compared under the same utility, a cell-free approach is the superior design: it delivers the highest lower-percentile rates at every AP density and cell size, its advantage over massive MIMO grows from 8\% to 62\% as the density increases from 75 to 250~APs/km$^2$, and it does so at the same fairness as its competitors. 

Decentralized percentile optimization, in which the global nature of the percentile objective must be reconciled with local processing at the APs, and the joint optimization of beamforming and power in the cell-free setting remain open problems for future work.

\footnotesize
\bibliography{P_Ref}
\bibliographystyle{ieeetr}

\vfill

\end{document}